\newcommand{\blind}{0}
\numberwithin{equation}{section}
\DeclareMathOperator*{\var}{var}
\DeclareMathOperator*{\Ex}{E}
\newcommand{\bz}{\bm z}
\newcommand{\bZ}{\bm Z}
\newcommand{\bd}{\bm d}
\newcommand{\bh}{\bm h}
\newcommand{\mM}{\mathcal M}
\newcommand{\mX}{\mathcal X}
\newcommand{\mY}{\mathcal Y}
\newcommand{\bbR}{\mathbb R}
\newcommand{\bbT}{\mathrm{T}}
\newcommand{\bbZ}{\mathrm{Z}}
\newcommand{\blambda}{\bm \lambda}
\newtheorem{thm}{Theorem}[section]
\newtheorem{cor}{Corollary}[section]
\newtheorem{lem}{Lemma}[section]
\theoremstyle{remark}
\newtheorem{rmk}{Remark}[section]
\begin{document}

\def\spacingset#1{\renewcommand{\baselinestretch}%
{#1}\small\normalsize} \spacingset{1}


\if0\blind
{
  \title{\bf Eigen-Adjusted Functional Principal Component Analysis}
  \author{Ci-Ren Jiang\\
    Institute of Statistical Science, Academia Sinica, \\
    \\
    Eardi Lila \\
    Department of Biostatistics, University of Washington, Seattle, \\
    \\
    John AD Aston\thanks{
    \textit{j.aston@statslab.cam.ac.uk}}\hspace{.2cm} \\
    Statistical Laboratory, University of Cambridge, \\
    \\
    Jane-Ling Wang \\
    Department of Statistics, University of California, Davis}
  \maketitle
} \fi

\if1\blind
{
  \bigskip
  \bigskip
  \bigskip
  \begin{center}
    {\LARGE\bf Title}
\end{center}
  \medskip
} \fi

\bigskip
\begin{abstract}
Functional Principal Component Analysis (FPCA) has become a widely-used dimension reduction tool for functional data analysis. When additional covariates are available, existing FPCA models integrate them either in the mean function or in both the mean function and the covariance function. However, methods of the first kind are not suitable for data that display second-order variation, while those of the second kind are time-consuming and make it difficult to perform subsequent statistical analyses on the dimension-reduced representations. To tackle these issues, we introduce an eigen-adjusted FPCA model that integrates covariates in the covariance function only through its eigenvalues. In particular, different structures on the covariate-specific eigenvalues -- corresponding to different practical problems -- are discussed to illustrate the model's flexibility as well as utility. To handle functional observations under different sampling schemes, we employ local linear smoothers to estimate the mean function and the pooled covariance function, and a weighted least square approach to estimate the covariate-specific eigenvalues. The convergence rates of the proposed estimators are further investigated under the different sampling schemes. In addition to simulation studies, the proposed model is applied to functional Magnetic Resonance Imaging scans, collected within the Human Connectome Project, for functional connectivity investigation.
\end{abstract}

\noindent%
{\it Keywords:}  covariate-specific eigenvalue, functional Magnetic Resonance Imaging, local linear smoother, weighted least squares.
\vfill

\newpage
\spacingset{1.5} 
\section{INTRODUCTION}
\label{sec:intro}

Principal component analysis is a classical dimension reduction tool in multivariate statistical analysis and its extension to functional data, termed Functional Principal Component Analysis (FPCA), plays a central role in the analysis of samples that are curves, functions, or surfaces as shown in the survey article by \cite{Shan:14}. However, most of the existing FPCA approaches \citep[see, e.g.,][]{Rao:58, DauxPR:82, RiceS:91, Card:00, JameHS:00, RiceW:01, YaoMW:05:1, HallMW:06,LiH:10:1, ChenJ:17} assume that the observed samples arise from the same population and do not accommodate information from a set of additional covariates. In this paper, motivated by the analysis of spatio-temporal brain imaging data, we introduce a novel FPCA model where additional covariates, modeling for instance spatial locations, affect both the mean and covariance of the functional samples describing the temporal component of the data.

So far, relatively little of the literature has covered FPCA methodology that adapts to the covariate information. Some of the proposed methods merely integrate the covariate information in the systematic part, i.e. the mean function \cite[see, e.g.,][]{ChioMW:03,JianW:11,ZhanPW:13,ZhanW:14}. In order to ease the computational burden and alleviate the curse of dimensionality, in \cite{ChioMW:03, ZhanPW:13} and \cite{ZhanW:14} additive structures for the mean function are considered, while in \cite{JianW:11} a single index model for the mean is adopted. Here, we refer to these approaches as mean-adjusted FPCA. Alternatively, both the systematic and stochastic parts, i.e. the mean function and the covariance structure, are assumed to vary with the covariates \cite[see, e.g.,][]{Card:06,JianW:10,LiSB:14}. Specifically, \cite{Card:06,JianW:10} assume that the covariance function varies with the covariates via both its eigenfunctions and eigenvalues. \cite{JianAW:09} extend the covariate-adjusted FPCA model in \cite{JianW:10} to a multiplicative model for Positron Emission Tomography image analysis. Moreover, \cite{LiSB:14} propose a model that accounts for the covariate effect in skewed functional data. We refer to these methods as fully-adjusted FPCA.

In some applications, such as functional brain imaging, the covariance structure of functional data describing temporal brain activity is a key element to study brain connectivity. However, mean-adjusted FPCA models are not designed to integrate covariate information in the covariance function, although they are attractive because of their computational efficiency. On the other hand, fully-adjusted FPCA approaches are time-consuming if the dimension of the covariates is not low and they make subsequent statistical analyses on the dimension-reduced output difficult due to the fact that they return a set of covariates-specific eigenvalues and eigenfunctions, which are not easily comparable across covariate values.

Therefore, in this work, we will extend the PCA model in \cite{Flur:84, Flur:86} to functional data and term it eigen-adjusted FPCA. Specifically, the eigen-adjusted FPCA model assumes the covariance function varies with the covariates via its eigenvalues while the corresponding eigenfunctions remain independent of the covariates. The advantages of the proposed model are two-fold. In practical applications, FPCA is generally the first step of a subsequent statistical analysis. For instance, in the proposed application to brain imaging data in Section~\ref{sec:application}, the ultimate goal is to find a parcellation that clusters brain locations (modeled as covariates) with distinct functional connectivity patterns (i.e. distinct covariance structures). For this purpose, the covariate-independent eigenfunctions of the eigen-adjusted FPCA model offer a common reference frame that enables comparison and statistical analysis of the associated covariate-specific eigenvalues. Moreover, eigen-adjusted FPCA while being more flexible than mean-adjusted FPCA maintains comparable computational times. This is particularly important for big data applications, such as the aforementioned brain imaging studies. By means of simulations, we demonstrate that eigen-adjusted FPCA can also be useful under a fully-adjusted generative model, where applying fully-adjusted FPCA might not be computationally feasible. Other related work can be found in \cite{BoenPR:02,BenkH:05,BenkHK:09,BoenRS:10,CoffHDH:11}.


Both functional data and longitudinal data can be modeled as observations from stochastic processes, but they are different in their sampling schemes. Specifically, functional data are densely and regularly recorded while longitudinal data are sparsely and irregularly observed. FPCA has been classically concerned with the analysis of densely observed functional samples. More recently, the methodology has been extended to deal with sparsely observed longitudinal data, where only a few repeated measurements are available for each sample, invalidating approaches based on nonparametric reconstructions of individual functions \citep[see, e.g.,][]{YaoMW:05:1, HallMW:06}. In this work, we develop a unified eigen-adjusted FPCA framework for both types of data.

The rest of this paper is organized as follows. In Section~\ref{sec:model} we introduce the eigen-adjusted FPCA model. The proposed estimators and their asymptotical properties are provided in Section~\ref{sec:estimation}. Section~\ref{sec:simulations} consists of simulation studies that demonstrate the finite sample performance of the proposed method. The eigen-adjusted FPCA model is applied to analyze brain functional connectivity in Section~\ref{sec:application}. Conclusions and discussions are provided in Section~\ref{sec:discussions}. Moreover, the assumptions are given in Appendix~\ref{app:assum}, and an alternative estimation approach for the covariate-specific eigenvalues is provided in Appendix~\ref{app:PC}. The proofs are provided in the supplementary materials.

\section{COVARIATE-ADJUSTED FPCA}\label{sec:model}

Traditionally, FPCA handles random trajectories that are functions of time (or space). Ignoring any covariate information for the moment, let $X(t)$ be a stochastic process in $L^2$ with mean function $\mu(t)$ and covariance function $\Gamma(s,t)$ where $s,t\in\bbT$ and $\bbT$ is a finite compact interval in $\bbR$. FPCA is equivalent to a spectral decomposition of $\Gamma$ and leads to the well-known Karhunen-Lo\`{e}ve decomposition of $X$,
\begin{equation*}
X(t) = \mu(t) + \sum_{j=1}^\infty A_j\varphi_j(t),
\end{equation*}
where $\varphi_j$ is the eigenfunction of $\Gamma$ associated to the $j$-th largest eigenvalue, and $A_j = \langle X-\mu,\varphi_j\rangle$ is the $j$-th functional principal component (PC) score. Here $\langle \cdot,\cdot\rangle$ stands for the inner product in the $L^2$ space, i.e., $\langle a,b\rangle = \int_\bbT a(t)b(t)dt$ for $a,b\in L^2(\bbT)$.

When covariate information is available, $X$ is viewed as a random function of both time $t$ and covariate $\bz$. Specifically, $X(t,\bz)$ is a random function with mean $\mu(t,\bz)$ and covariance function $\Upsilon(s,t,\bz)$, where $s,t\in\bbT$ and $\bz$ is in a $p$-dimensional compact hypercube $\bbZ \subset \bbR^p$. To accommodate $\bz$ into the framework of FPCA, \cite{JianW:10} proposed two models: fully-adjusted FPCA (fFPCA) and mean-adjusted FPCA (mFPCA). The difference between them is how the covariance structure is handled. In fFPCA, it is assumed that there exists an orthogonal expansion of $\Upsilon(s,t,\bz)$ in terms of eigenfunctions $\psi_j(t,\bz)$ and non-increasing eigenvalues $\eta_j(\bz)$, i.e., $\Upsilon(s,t,\bz) = \sum_{j=1}^\infty \eta_j(\bz)\psi_j(s,\bz)\psi_j(t,\bz)$, and thus one can represent $X$ as
\begin{equation*}
X(t,\bz) = \mu(t,\bz) + \sum_{j=1}^\infty A_j(\bz) \psi_j(t,\bz),
\end{equation*}
where $A_j = \langle X-\mu, \psi_j \rangle$ is the $j$-th functional PC score with mean zero and variance $\eta_j(\bz)$. In mFPCA, $\bz$ is treated as a realization of the random variable $\bZ$. Ignoring $\bZ$ after centering leads to a pooled covariance
\begin{equation*}
\Gamma^*(s,t) = \Ex \{\Upsilon(s,t,\bZ)\} = \int_\bbZ \Upsilon(s,t,\bz)dG_{\bz}(\bz),
\end{equation*}
where $G_{\bz}(\cdot)$ is the distribution function of $\bZ$. Assume $\Gamma^*$ is a smooth function and there exists an orthogonal expansion in terms of eigenfunctions $\psi_j^*(t)$ and nonincreasing eigenvalues $\eta_j^*$; i.e., $\Gamma^*(s,t) = \sum_{j=1}^\infty \eta_j^*\psi_j^*(s)\psi_j^*(t)$, and the random function $X$ can be represented as
\begin{equation*}
X(t,\bz) = \mu(t,\bz) + \sum_{j=1}^\infty A_j^* \psi_j^*(t),
\end{equation*}
where $A_j^* = \langle X-\mu, \psi_j^* \rangle$ is the $j$-th functional PC score with mean zero and variance $\eta_j^*$. In mFPCA, $\Gamma^*$ can be estimated with a lower-dimensional smoother and thus the estimator is of a faster convergence rate. On the contrary, fFPCA is a more flexible model as the covariate information is used in estimating the covariance function as well as the eigenvalues and the eigenfunctions.

However, FPCA rarely represents the final stage of a statistical analysis as, no matter how complex, an FPCA model will unlikely represent a faithful description of the underlying generative model. FPCA is instead commonly used as a dimension reduction tool, where the subsequent analysis is performed on the PC scores. So, despite the fFPCA model being more flexible, it introduces complications when using it as a dimension reduction tool, as the scores $A_j(\bz)$ do not give a complete description of the stochastic process at the covariate value $\bz$; in fact, the scores are expressed with respect to a reference frame $\psi_j(t,\bz)$ which is also dependent on $\bz$, invalidating comparison across covariate values.

Therefore, in the following subsection, we propose a model that is flexible enough to deal with dependencies of the covariance functions on covariate information, yet representing an effective and useful dimension reduction tool.

\subsection{Eigen-Adjusted FPCA}
Assume that the covariance function $\Gamma(s,t,\bz)$ has an orthogonal expansion in terms of eigenfunctions $\phi_k(t)$ and nonnegative eigenvalues $\lambda_k(\bz)$. Specifically,
\begin{equation}\label{eq:cov}
\Gamma(s,t,\bz) = \sum_{k=1}^\infty \lambda_k(\bz)\phi_k(s)\phi_k(t),
\end{equation}
where $\lambda_1(\bz)> \lambda_2(\bz) > \cdots \geq 0$ 
and $\sum_{k=1}^\infty \lambda_k(\bz)<\infty$ for $\bz\in\bbZ$ and $s,t\in\bbT$. By the Karhunen-Lo\'{e}ve expansion, $X(t,\bz)$ can be represented as
\begin{equation}\label{eq:cfpca}
X(t,\bz) = \mu(t,\bz) + \sum_{k=1}^\infty A_{k}(\bz) \phi_k(t),
\end{equation}
where $A_k(\bz)=\langle X-\mu,\phi_k\rangle$ is the $j$-th functional PC score with mean zero and variance $\lambda_k(\bz)$. Compared to mFPCA, Model (\ref{eq:cfpca}) is more general in that $\Gamma$ in (\ref{eq:cov}) can vary with $\bz$, and compared to fFPCA, it is computationally more efficient in estimating the covariance function. The second argument will be further demonstrated in Section \ref{sec:estimation}.

The model of $\Gamma$ in (\ref{eq:cov}) allows for different structural assumptions on $\lambda_k(\bz)$, tailored to specific problems. Roughly speaking, there are three scenarios. The first scenario is that $\lambda_k(\bz)$ are grouped and may not vary smoothly with $\bz$. This setting was considered in most related FPCA literature [e.g., \cite{BenkHK:09} and \cite{BoenRS:10}], where the goal is to test the equality of covariance functions of different known groups. The second scenario is that $\lambda_k(\bz)$ are continuous and vary smoothly with $\bz$, such as when $\bz$ models spatial coordinates and we assume spatially smooth variation. The third scenario is that $\lambda_k(\bz)$ are piecewise smooth functions of $\bz$, i.e., $\lambda_k(\bz)$ are smooth functions within each group. The latter is a more realistic model for the analysis of the brain imaging data in our final application. 
In this paper, we focus on estimating $\lambda_k(\bz)$ under the second scenario as the third scenario can be dealt with by simply exploiting the group information. 
When the group structure is not known, our numerical studies suggest that estimating the eigenvalues as in the second scenario and applying a clustering approach to the estimated $\lambda_k(\bz)$ can help retrieve the group information.

\section{ESTIMATION}\label{sec:estimation}

Standard procedures to perform FPCA include (i) estimating the mean function, (ii) estimating the covariance function, (iii) estimating the eigenfunctions and eigenvalues, and (iv) predicting the functional PC scores. Under some regularity conditions on the mean and covariance functions, local linear smoothers \citep{FanG:96} can be applied to estimate them. Eigenfunctions and eigenvalues can be estimated via the application of an eigen-decomposition to the estimated covariance. Standard numerical approaches and PACE \citep{YaoMW:05:1} can be applied to predict the functional PC scores when data are dense and sparse, respectively. In step (ii), directly estimating $\Gamma(s,t,\bz)$ is computational demanding especially when both $\bbT$ and $\bbZ$ are multidimensional. Exploiting the special structure of $\Gamma(s,t,\bz)$ in (\ref{eq:cov}), to ease computational burden, we circumvent step (ii). Specifically, we apply an eigen-decomposition to the pooled covariance $\Gamma^*=\Ex (\Gamma)$ to obtain the eigenfunctions, and propose a Weighted Least Squares (WLS) approach, for both dense and sparse data, to estimate the eigenvalues that vary with $\bz$. Below we provide the details of the proposed estimators as well as their asymptotic properties.


\subsection{Mean Function}
Let $\bz_i\in\bbZ$ be the covariate of the $i$th subject, whose $j$th observation made at time $t_{ij}\in\bbT$ is
\begin{equation}\label{eq:cfpca2}
Y_{ij} = \mu(t_{ij},\bz_i) + \sum_{k=1}^\infty A_{ik} \phi_k(t_{ij}) + \varepsilon_{ij},
\end{equation}
where $\varepsilon_{ij}$ is the independent measurement error with mean zero and variance $\sigma^2$, for $j=1,\ldots,N_i$ and $i=1,\ldots,n$. Theoretically, any $(p+1)$-dimensional smoother can be applied to estimate $\mu$. Here, we use a $(p+1)$-dimensional local linear smoother and denote the estimator as $\hat\mu$. 
Specifically,
\begin{align*}
 \hat\mu(t,\bz) & = \hat b_0, \text{ where }\\ &  (\hat b_0,\hat b_1,\hat{\bm b}_2)^T  = \arg\min_{\bm{b}}  \frac{1}{n}\sum_{i=1}^n \frac{1}{N_i} \sum_{j=1}^{N_i}  \left\{ Y_{ij}-b_0-b_1(t_{ij}-t)-{\bm b}_2^T(\bz_i-\bz) \right\}^2 \\ & \hspace*{6cm} \times K_{h_t}(t_{ij}-t)\Big( \prod_{k=1}^p K_{h_z^{(k)}}(z_i^{(k)}-z^{(k)})\Big),
\end{align*}
$\bm{b} = (b_0,b_1,\bm b_2^T)^T$, $K_h(\cdot) = K(\cdot/h)/h$, $K$ is a kernel function defined in Assumption A.2 in Appendix A, and $h_t$ and $\bh_z = (h_z^{(1)},\ldots,h_z^{(p)})^T$ are the bandwidths for $\bbT$ and $\bbZ$, respectively. For simplicity, 
we assume that $h_z^{(k)}$'s are all of the same order as $h_z$. Let $\gamma_{nk} = \left( n^{-1}\sum_{i=1}^n N_i^{-k} \right)^{-1}$ for $k=1,2$, and $\delta_n = \left[\left\{1+1/(\gamma_{n1} h_t)\right\}\log n/(nh_z^p)\right]^{1/2}$; denote $h_1 \approx h_2$ if $h_1$ is of the same order as $h_2$, and $h_1 \lesssim h_2$ (resp. $h_1\gtrsim h_2$)  if $h_1$ is of smaller (resp. larger) order than $h_2$. Below we provide the asymptotical properties of $\hat\mu$.

\begin{thm} \label{A:thm1}
Assume that A.1-A.2 and B.1-B.2 hold. Then,
\begin{equation}\label{eq:as-muhat}
\sup_{t\in\bbT,\bz\in\bbZ} |\hat{\mu}(t,\bz)-\mu(t,\bz)| = O(h_t^2+h_z^2 + \delta_n) \ a.s..
\end{equation}
\end{thm}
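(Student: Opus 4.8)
\emph{Proof strategy (sketch).} The plan is to follow the classical bias--variance decomposition for multivariate local linear regression, while keeping careful track of the within-subject dependence and of the (possibly diverging) numbers of observations $N_i$. Writing $U_{ij} := \sum_{k=1}^{\infty} A_{ik}\phi_k(t_{ij})$, so that $Y_{ij} = \mu(t_{ij},\bz_i) + U_{ij} + \varepsilon_{ij}$ with $\mathbb{E}(U_{ij}\mid \bz_i)=0$, the WLS solution can be put in the familiar sandwich form $\hat\mu(t,\bz) = \mathbf{e}_1^{\top}\bm R_n(t,\bz)^{-1}\bm Q_n(t,\bz)$, where $\bm R_n$ is the $(p+2)\times(p+2)$ matrix of kernel-weighted moments of the local design vector $(1,\, t_{ij}-t,\, (\bz_i-\bz)^{\top})^{\top}$ and $\bm Q_n$ collects the corresponding weighted moments of the responses. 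Substituting the three pieces of $Y_{ij}$ and a second-order Taylor expansion of $\mu$ about $(t,\bz)$ splits $\hat\mu(t,\bz)-\mu(t,\bz)$ into a deterministic bias term and two stochastic terms, $\mathcal{R}^{U}_n(t,\bz)$ driven by $\{U_{ij}\}$ and $\mathcal{R}^{\varepsilon}_n(t,\bz)$ driven by $\{\varepsilon_{ij}\}$.

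First I would establish a uniform law of large numbers for $\bm R_n$. Each entry is an average over the independent subjects of a bounded-in-expectation, kernel-weighted within-subject average; after centering, a covering argument over a grid of $\bbT\times\bbZ$ whose spacing is polynomial in $n$, combined with the Lipschitz continuity and compact support of $K$ (Assumption~A.2), a Bernstein inequality, and Borel--Cantelli, yields $\sup_{t,\bz}\lVert \bm R_n(t,\bz)-\mathbb{E}\bm R_n(t,\bz)\rVert = O(\delta_n)$ a.s. Since the positivity and smoothness of the densities of $t_{ij}$ and $\bz_i$ (Assumption~B.2) together with A.2 render the limiting moment matrix nonsingular uniformly on $\bbT\times\bbZ$, $\bm R_n(t,\bz)^{-1}$ is uniformly bounded in operator norm for $n$ large, a.s. Plugging this into the bias term and using the bounded second derivatives of $\mu$ (Assumption~B.1) gives a bias of order $h_t^2 + h_z^2$ uniformly; the first-order Taylor contributions cancel by the defining property of the local linear weights, which also removes boundary effects.

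For the stochastic terms I would treat the two noise sources separately. The term $\mathcal{R}^{\varepsilon}_n$ is, conditionally on the design, a weighted average of independent mean-zero errors; its conditional variance is of order $(nh_z^p\gamma_{n1}h_t)^{-1}$, the factor $(\gamma_{n1}h_t)^{-1}$ arising because within a subject the $t$-kernel is summed over $N_i$ points and normalised by $N_i^{-1}$, while a truncation-plus-Bernstein argument on the same grid --- with the oscillation between grid points controlled by the Lipschitz modulus of the equivalent kernel --- upgrades this to a uniform a.s. bound of order $\{\log n/(nh_z^p\gamma_{n1}h_t)\}^{1/2}$. The term $\mathcal{R}^{U}_n$ is the more delicate one: within a subject the summands $U_{ij}$ are dependent, so I would group the sum over $j$ and treat each subject as a single independent unit $W_i(t,\bz)$ with $\mathbb{E}W_i=0$. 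Because $N_i^{-1}\sum_j K_{h_t}(t_{ij}-t)$ is $O(1)$ rather than $O(h_t^{-1})$, the off-diagonal ($j\ne j'$) part of $\mathbb{E}\{W_i^2\}$ does not shrink with $h_t$ and contributes an irreducible variance of order $(nh_z^p)^{-1}$ --- this is the ``$1$'' in $1+1/(\gamma_{n1}h_t)$ --- while the diagonal ($j=j'$) part contributes a further $(nh_z^p\gamma_{n1}h_t)^{-1}$; the same grid-plus-oscillation argument then yields a uniform a.s. bound of order $\{(1+1/(\gamma_{n1}h_t))\log n/(nh_z^p)\}^{1/2}=O(\delta_n)$. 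Combining the bias and the two stochastic bounds gives \eqref{eq:as-muhat}.

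The main obstacle is the uniform a.s. control of $\mathcal{R}^{U}_n$. Since the PC scores $A_{ik}$ are only assumed to possess finitely many moments (rather than being bounded or sub-Gaussian), no exponential inequality applies directly to the subject-level summands $W_i$: one must truncate $W_i$ at a level that grows slowly with $n$, bound the discarded tail together with the bias it induces via the assumed moment conditions on the scores, and still retain both the $\sqrt{\log n}$ rate and a probability bound summable in $n$ for the Borel--Cantelli step --- all while choosing the grid fine enough that the oscillation of the random local linear weights between adjacent grid points is $o(\delta_n)$ uniformly, which in turn uses the bandwidth conditions. Balancing the truncation level, the grid resolution and the available moment order against one another is where the real work lies; the $\varepsilon$-term and the bias are comparatively routine.
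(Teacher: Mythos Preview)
Your proposal is correct and follows essentially the same route as the paper, which in turn mirrors \cite{LiH:10:1}: express the local linear estimator in sandwich form, extract the $O(h_t^2+h_z^2)$ bias via a second-order Taylor expansion, and control the stochastic remainder uniformly by truncating at a slowly growing level, applying Bernstein's inequality on a grid over $\bbT\times\bbZ$, and closing with Borel--Cantelli. Your identification of the ``$1$'' in $1+1/(\gamma_{n1}h_t)$ with the off-diagonal (within-subject dependence) contribution and of the $1/(\gamma_{n1}h_t)$ with the diagonal contribution is precisely the mechanism that produces $\delta_n$. The one technical difference worth flagging is how the oscillation between grid points is handled: the paper first proves uniform rates for \emph{indicator}-based empirical processes (its Lemma~1) and then passes to kernel-weighted sums by writing $K_h$ as a Stieltjes integral against $dK$ (its Lemma~2), which uses only that $K$ is of bounded variation --- exactly what Assumption~A.2 provides. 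You instead invoke Lipschitz continuity of $K$ to bound the oscillation directly, but A.2 does not assume Lipschitz; under the stated hypotheses the indicator-process route is the one that goes through cleanly, so you would either need to add a Lipschitz hypothesis or adopt the same integration-by-parts device. A minor labeling point: the density positivity you use for the uniform nonsingularity of the limiting moment matrix is Assumption~A.1, not B.2.
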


In (\ref{eq:as-muhat}), $h_t^2+h_z^2$ and $\delta_n^2$ are the order of bias and that of variance, respectively, due to smoothing. We further elaborate on the convergence rates of $\hat\mu$ under two different sampling schemes in the following corollary.
\begin{cor} \label{A:cor1}
Assume that A.1-A.2 and B.1-B.2 hold. \\
(a) 
If $\max_{1\leq i\leq n} N_i \leq \mM$ for some $\mM<\infty$ and $h_t\approx h_z \approx h$,
\begin{equation*}
\sup_{t\in\bbT,\bz\in\bbZ} |\hat{\mu}(t,\bz)-\mu(t,\bz)| = O\big(h^2 + \{\log n/(nh^{p+1})\}^{1/2}\big) \ a.s..
\end{equation*}
The optimal convergence rate is $\left(\log n/n\right)^{2/(p+5)}$.

(b) If $\max_{1\leq i\leq n} N_i \geq \mM_n$, where $\mM_n^{-1} \approx h_t \lesssim h_z \approx (\log n/n)^{1/(p+4)} $ is bounded away from zero,
\begin{equation}
\sup_{t\in\bbT,\bz\in\bbZ} |\hat{\mu}(t,\bz)-\mu(t,\bz)| = O\big( (\log n/n)^{2/(p+4)}\big) \ a.s..
\end{equation}
\end{cor}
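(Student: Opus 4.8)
Both parts are obtained by substituting the stated sampling-scheme and bandwidth conditions into the bias-plus-variance bound $h_t^2+h_z^2+\delta_n$ of Theorem~\ref{A:thm1}, where $\delta_n=[\{1+1/(\gamma_{n1}h_t)\}\log n/(nh_z^p)]^{1/2}$. The only genuine work is (i) pinning down the order of $\gamma_{n1}=(n^{-1}\sum_{i=1}^n N_i^{-1})^{-1}$ — i.e.\ the harmonic mean of $N_1,\dots,N_n$ — in each regime, and (ii), for part~(a), minimizing the resulting expression over the common bandwidth $h$.

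For part~(a), since $1\le N_i\le\mM$ for every $i$, the average $n^{-1}\sum_{i=1}^n N_i^{-1}$ lies in $[\mM^{-1},1]$, so $\gamma_{n1}\approx 1$. Because $h_t\to 0$ this gives $1+1/(\gamma_{n1}h_t)\approx h_t^{-1}$, hence, using $h_t\approx h_z\approx h$, $\delta_n\approx\{\log n/(nh^{p+1})\}^{1/2}$, while the bias term is $h_t^2+h_z^2\approx h^2$; this is exactly the stated rate $h^2+\{\log n/(nh^{p+1})\}^{1/2}$. Equating the two contributions,
\begin{equation*}
h^2 \approx \left\{ \frac{\log n}{nh^{p+1}} \right\}^{1/2} \iff h^{p+5}\approx \frac{\log n}{n} \iff h\approx\left(\frac{\log n}{n}\right)^{1/(p+5)},
\end{equation*}
which yields the optimal rate $(\log n/n)^{2/(p+5)}$; one checks that $h\to 0$ and $nh^{p+1}/\log n\to\infty$, so this choice is admissible under Assumptions~B.1--B.2.

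For part~(b), the trajectories are observed densely enough that the harmonic mean $\gamma_{n1}$ is large: the assumed lower bound on the sampling frequencies together with $\mM_n^{-1}\approx h_t$ forces $\gamma_{n1}h_t$ to be bounded away from zero, so $1+1/(\gamma_{n1}h_t)=O(1)$ and $\delta_n\approx\{\log n/(nh_z^p)\}^{1/2}$. Plugging in $h_z\approx(\log n/n)^{1/(p+4)}$ gives $nh_z^p\approx n(\log n/n)^{p/(p+4)}$, hence $\delta_n\approx(\log n/n)^{2/(p+4)}$, while the bias satisfies $h_t^2+h_z^2\lesssim h_z^2\approx(\log n/n)^{2/(p+4)}$ because $h_t\lesssim h_z$. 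Summing the two terms yields the claimed rate.

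The main obstacle — essentially the only non-mechanical point — is the control of $\gamma_{n1}$ in the dense regime of part~(b): since $\gamma_{n1}$ is a harmonic mean, one must argue carefully that the imposed condition on the $N_i$'s suffices to make $1/(\gamma_{n1}h_t)$ negligible next to the $O(1)$ term, and at the same time check that the bandwidth choices remain compatible with the admissibility constraints of Assumptions~B.1--B.2 (e.g.\ $nh_z^p/\log n\to\infty$ and the consistency of $h_t$ with $\mM_n^{-1}$). The remaining steps are routine bookkeeping of orders of magnitude.
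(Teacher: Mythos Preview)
Your proposal is correct and follows exactly the route the paper intends: specialize the bias--variance bound $h_t^2+h_z^2+\delta_n$ of Theorem~\ref{A:thm1} by pinning down the order of $\gamma_{n1}$ in each sampling regime, then for part~(a) balance bias against variance to obtain the optimal bandwidth. Your flag on the harmonic-mean issue in part~(b) is well taken---the stated hypothesis on $\max_i N_i$ must be read as a rate condition on the $N_i$'s collectively (so that $\gamma_{n1}h_t$ stays bounded below), and with that reading the remaining computations are, as you say, routine bookkeeping.
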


\begin{rmk}
(a) represents the case of longitudinal data where $N_i$ is finite and so it is reasonable to have $h_t$ and $h_z$ of the same order.  (b) represents functional data where the observations are intensely recorded. Since $N_i\rightarrow\infty$, $h_t$ is of smaller order than $h_z$.
\end{rmk}

\subsection{Pooled Covariance and Its Eigenfunctions}
In practice, the eigenfunctions are estimated by performing an eigen-decomposition on the discretized covariance function. However, directly estimating $\Gamma$ may be too computationally demanding especially when both $\bbT$ and $\bbZ$ are multi-dimensional. Thus, we propose to estimate the eigenfunctions $\phi_k$ by applying an eigen-decomposition to the pooled covariance function,
\begin{equation}\label{eq:pcov}
\Gamma^*(s,t) = \Ex\{\Gamma(s,t,\bZ)\} = \sum_{k=1}^\infty \Ex\{\lambda_k(\bZ)\} \phi_k(s)\phi_k(t),
\end{equation}
where $\Gamma(s,t,\bZ)$, $\lambda_k(\bZ)$ and $\phi_k(t)$ are defined in (\ref{eq:cov}). Any two-dimensional smoother can be applied to estimate $\Gamma^*$ and we pick a local linear smoother here. Specifically,
 \begin{align*}
& \hat\Gamma^*(s,t)  = \hat b_0 \text{, where } \\
& (\hat b_0,\hat b_1,\hat b_2)^T  = \arg\min_{\bm{b}} \frac{1}{n}\sum_{i=1}^n \frac{1}{N_i(N_i-1)} \sum_{j\neq k}^{N_i} \left\{ \hat U_{ij} \hat U_{ik}-b_0-b_1(t_{ij}-s)-b_2(t_{ik}-t) \right\}^2 \\
& \hspace{6cm} \times K_{h_\Gamma}(t_{ij}-s)K_{h_\Gamma}(t_{ik}-t),
\end{align*}
$\bm{b} = (b_0,b_1,b_2)^T$, $\hat U_{ij}=Y_{ij}-\hat\mu(t_{ij},\bz_i)$, $K$ is a kernel function defined as that for $\hat\mu$, and $h_\Gamma$ is the smoothing bandwidth. The diagonal terms $\{\hat U_{ij}\hat U_{ij}\mid 1\leq j \leq N_i, 1\leq i \leq n\}$ are removed while estimating $\Gamma^*$ since $\text{cov}(Y_{ij},Y_{ik}) = \Gamma(t_{ij},t_{ik},\bz_i)+ \sigma^2 \delta_{(t_{ij}=t_{ik})},$ where $\delta_{(t_{ij}=t_{ik})}=1$ if $t_{ij}=t_{ik}$ and 0 otherwise. For simplicity, we let 
\begin{equation*}
\delta_{n1}  = \left\{\left(1+\frac{1}{\gamma_{n1} h_\Gamma}\right)\frac{\log n}{n}\right\}^{1/2}\text{, and }
\delta_{n2}  = \left\{\left(1+\frac{1}{\gamma_{n1} h_\Gamma}+ \frac{1}{\gamma_{n2}h_\Gamma^2}\right)\frac{\log n}{n}\right\}^{1/2}.
\end{equation*}
\begin{thm} \label{A:thm2}
Assume that A.1-A.2 and B.1-B.4 hold. We can obtain
\begin{align*}
\sup_{s,t\in\bbT} |\hat{\Gamma}^*(s,t)-\Gamma^*(s,t)|  & = O\big(\delta_{n2}+h_\Gamma^2+\delta_{n1}(h_t^2+h_z^2+\delta_n)+h_t^4+h_z^4+\delta_n^2\big)  \ a.s., \\
& = O(\delta_{n2}+h_\Gamma^2+h_t^4+h_z^4+\delta_n^2)  \ a.s..
\end{align*}
\end{thm}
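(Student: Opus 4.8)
The plan is to compare $\hat\Gamma^*$ with an \emph{oracle} smoother $\tilde\Gamma^*$ obtained by running the same local linear procedure on the unobservable products $U_{ij}U_{ik}$, where $U_{ij}:=Y_{ij}-\mu(t_{ij},\bz_i)=\sum_{k\ge1}A_{ik}\phi_k(t_{ij})+\varepsilon_{ij}$, and then to control separately the error of plugging in $\hat\mu$ for $\mu$. Write $\hat U_{ij}=U_{ij}+\Delta_{ij}$ with $\Delta_{ij}=\mu(t_{ij},\bz_i)-\hat\mu(t_{ij},\bz_i)$, so that $\hat U_{ij}\hat U_{ik}=U_{ij}U_{ik}+U_{ij}\Delta_{ik}+U_{ik}\Delta_{ij}+\Delta_{ij}\Delta_{ik}$. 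Since the local linear operator is linear in the responses, $\hat\Gamma^*-\tilde\Gamma^*$ equals the sum of the three smoothers applied to the last three arrays. By Theorem~\ref{A:thm1}, $\max_{i,j}|\Delta_{ij}|=O(h_t^2+h_z^2+\delta_n)$ a.s.; factoring this bound out of the cross smoothers leaves a local linear smoother of $U_{ij}g(t_{ik},\bz_i)$ for bounded $g$, which, being a mean-zero array since $\Ex[U_{ij}\mid t_{ij},\bz_i]=0$, has uniform size $O(\delta_{n1})$ a.s. by exactly the exponential-inequality arguments used for $\hat\mu$, while the $\Delta_{ij}\Delta_{ik}$ smoother is $O((h_t^2+h_z^2+\delta_n)^2)$. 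Hence $\sup_{s,t}|\hat\Gamma^*-\tilde\Gamma^*|=O\big(\delta_{n1}(h_t^2+h_z^2+\delta_n)+h_t^4+h_z^4+\delta_n^2\big)$ a.s.

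For the oracle smoother, write $\tilde\Gamma^*=R_1/R_0$ in terms of the weighted sums of the $K_{h_\Gamma}(t_{ij}-s)K_{h_\Gamma}(t_{ik}-t)$-products; under A.1--A.2 and B.1--B.4 the random denominator $R_0$ concentrates uniformly around a positive multiple of the design density, so it suffices to bound $\sup_{s,t}|R_1-\Gamma^*(s,t)R_0|$. Because for $j\neq k$ the measurement errors are independent, $\Ex[U_{ij}U_{ik}\mid t_{ij},t_{ik},\bz_i]=\Gamma(t_{ij},t_{ik},\bz_i)$ and $\Ex_{\bZ}\Gamma(\cdot,\cdot,\bZ)=\Gamma^*$; a second-order Taylor expansion of $\Gamma^*$ about $(s,t)$ then produces the $O(h_\Gamma^2)$ bias. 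For the stochastic part I would decompose $U_{ij}U_{ik}-\Gamma^*(t_{ij},t_{ik})$ into $\{U_{ij}U_{ik}-\Gamma(t_{ij},t_{ik},\bz_i)\}$, the usual within-curve and cross-curve noise, plus $\{\Gamma(t_{ij},t_{ik},\bz_i)-\Gamma^*(t_{ij},t_{ik})\}$, the extra $\bz_i$-driven fluctuation, which is mean zero given the time points and i.i.d. across curves. Truncating the Karhunen--Lo\`eve sum $\sum_k A_{ik}\phi_k$, applying Bernstein-type inequalities to the outer sum over $i$ and to the within-curve $U$-statistics, and closing the supremum over $(s,t)\in\bbT^2$ by a covering/chaining argument yields the a.s. rate $\delta_{n2}$, in which the summand $1$ counts the $n$ curves, $1/(\gamma_{n1}h_\Gamma)$ the partial within-curve averaging of pairs, and $1/(\gamma_{n2}h_\Gamma^2)$ the fully sparse contribution. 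Thus $\sup_{s,t}|\tilde\Gamma^*-\Gamma^*|=O(\delta_{n2}+h_\Gamma^2)$ a.s.

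Combining the two displays gives the first stated rate. For the second, note that eventually $\delta_{n2}\le1$, so $\delta_{n1}^2\le\delta_{n2}^2\le\delta_{n2}$, and by the arithmetic--geometric mean inequality $\delta_{n1}(h_t^2+h_z^2+\delta_n)\le\tfrac12\delta_{n1}^2+\tfrac12(h_t^2+h_z^2+\delta_n)^2=O(\delta_{n2}+h_t^4+h_z^4+\delta_n^2)$, so that term is absorbed.

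The main obstacle is the sharp \emph{uniform} stochastic rate $\delta_{n2}$ for the oracle smoother. Three points are delicate: (i) the conditional mean of $U_{ij}U_{ik}$ is $\Gamma(t_{ij},t_{ik},\bz_i)$, not $\Gamma^*$, so one must show that the cross-curve fluctuation of $\Gamma(\cdot,\cdot,\bz_i)$ around $\Gamma^*$ averages at the $\{\log n/n\}^{1/2}$ rate and does not corrupt the bias term; (ii) truncating the infinite score expansion under only the moment conditions in B.1--B.4; and (iii) merging the within-curve $U$-statistic structure of the $j\neq k$ sum with the between-curve averaging inside a single exponential-inequality-plus-covering argument, so as to obtain almost-sure, rather than merely in-probability, uniform control. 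The plug-in step is comparatively routine, since it only reuses Theorem~\ref{A:thm1} together with a bound of order $\delta_{n1}$ on a smoother of a mean-zero array.
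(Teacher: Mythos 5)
Your proposal is correct and follows essentially the same route as the paper: decompose $\hat U_{ij}\hat U_{ik}=U_{ij}U_{ik}+U_{ij}\Delta_{ik}+U_{ik}\Delta_{ij}+\Delta_{ij}\Delta_{ik}$ and bound the plug-in terms via Theorem~\ref{A:thm1} (giving $\delta_{n1}(h_t^2+h_z^2+\delta_n)+h_t^4+h_z^4+\delta_n^2$), then handle the oracle smoother by concentrating the kernel denominator, Taylor-expanding $\Gamma^*$ for the $O(h_\Gamma^2)$ bias, and applying a truncation-plus-Bernstein-plus-covering argument (the paper's Lemmas on $H_n$ and $D_{r,s,n}$) to the mean-zero array $U_{ij}U_{ik}-\Gamma^*(T_{ij},T_{ik})$ to get the uniform a.s. rate $\delta_{n2}$. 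Your AM--GM absorption of the cross term into $\delta_{n2}+h_t^4+h_z^4+\delta_n^2$ is a slightly more explicit justification of the second display than the paper's remark, but the substance is identical.
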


\begin{rmk} Note that $\delta_{n1}(h_t^2+h_z^2+\delta_n)$ should be of smaller order than $\delta_{n2}$  as $h_t$, $h_z$ and $\delta_n$ go to zero as $n\rightarrow\infty$.
When $p\leq 3$ (in most real examples and in our data), it is reasonable to believe that $\delta_n^2 \lesssim \delta_{n2}$ and we elaborate the convergence rates in the following corollary under such a condition. Even so, some brief discussions on the convergence rates when $p\geq 4$ will be given in Section 3.4.
\end{rmk}

\begin{cor} \label{A:cor2}
Assume that A.1-A.2 and B.1-B.4 hold. \begin{itemize}
\item[(a)] If $\max_{1\leq i\leq n} N_i \leq \mM$ for some $\mM<\infty$ and $\{(\log n/n) h_\Gamma^2\}^{1/(2p+2)} \lesssim h_t, h_z \lesssim h_\Gamma^{1/2}$,
\begin{equation}
\sup_{s,t\in\bbT} |\hat{\Gamma}^*(s,t)-\Gamma^*(s,t)| = O\left( h_\Gamma^2 + \{\log n/(nh_\Gamma^2)\}^{1/2}\right) \ a.s..
\end{equation}
The restriction of bandwidths holds automatically when $p=1$, and it leads to $h_\Gamma \gtrsim ( \log n/n )^{1/(p-1)}$ when $p\geq 2$. The optimal convergence rate is $( \log n/n )^{1/3}$.


\item[(b)] If $\max_{1\leq i\leq n} N_i \geq \mM_n$, where $\mM_n^{-1} \lesssim h_\Gamma \approx h_t \lesssim (\log n/n)^{1/4}$ is bounded away from zero,
\begin{equation}
\sup_{s,t\in\bbT} |\hat{\Gamma}^*(s,t)-\Gamma^*(s,t)| = O\left(h_z^4+ \delta_n^2+(\log n/n)^{1/2}\right) \ a.s..
\end{equation}
The optimal convergence rate is $( \log n/n )^{1/2}$ given $h_z \lesssim (\log n/n)^{1/8}$. 
\end{itemize}
\end{cor}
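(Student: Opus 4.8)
The plan is to deduce both parts directly from Theorem~\ref{A:thm2}, whose bound $\sup_{s,t\in\bbT}|\hat\Gamma^*(s,t)-\Gamma^*(s,t)|=O(\delta_{n2}+h_\Gamma^2+h_t^4+h_z^4+\delta_n^2)$ a.s.\ already isolates the five competing error terms; what remains is to substitute the sampling-scheme-specific bandwidth constraints and read off the dominant term in each regime. The only step requiring care is the evaluation of the effective-sample-size factors $\gamma_{n1},\gamma_{n2}$, and hence of $\delta_n$ and $\delta_{n2}$: one must decide which of the summands $1$, $(\gamma_{n1}h_\Gamma)^{-1}$, $(\gamma_{n2}h_\Gamma^2)^{-1}$ governs $\delta_{n2}^2$, and which of $1$, $(\gamma_{n1}h_t)^{-1}$ governs $\delta_n^2$.

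For part (a) I would first note that $\max_{1\le i\le n}N_i\le\mM<\infty$ forces $n^{-1}\sum_{i=1}^nN_i^{-k}\in[\mM^{-k},1]$, so $\gamma_{n1}\approx\gamma_{n2}\approx1$; since the bandwidths vanish, $(\gamma_{n2}h_\Gamma^2)^{-1}$ dominates inside $\delta_{n2}^2$ and $(\gamma_{n1}h_t)^{-1}$ dominates inside $\delta_n^2$, giving $\delta_{n2}\approx\{\log n/(nh_\Gamma^2)\}^{1/2}$ and $\delta_n^2\approx\log n/(nh_th_z^p)$. The upper bounds $h_t,h_z\lesssim h_\Gamma^{1/2}$ give $h_t^4,h_z^4\lesssim h_\Gamma^2$, while the lower bounds $\{(\log n/n)h_\Gamma^2\}^{1/(2p+2)}\lesssim h_t,h_z$, together with the elementary inequality $h_th_z^p\ge\min(h_t,h_z)^{p+1}$, give $h_th_z^p\gtrsim\{(\log n/n)h_\Gamma^2\}^{1/2}$ and hence $\delta_n^2\lesssim\{\log n/(nh_\Gamma^2)\}^{1/2}\approx\delta_{n2}$. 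Thus Theorem~\ref{A:thm2} collapses to $O(h_\Gamma^2+\{\log n/(nh_\Gamma^2)\}^{1/2})$. The auxiliary remarks are then pure arithmetic: the bandwidth window $[\{(\log n/n)h_\Gamma^2\}^{1/(2p+2)},h_\Gamma^{1/2}]$ is nonempty exactly when $\log n/n\lesssim h_\Gamma^{p-1}$, which holds automatically for $p=1$ and amounts to $h_\Gamma\gtrsim(\log n/n)^{1/(p-1)}$ for $p\ge2$; and balancing $h_\Gamma^2$ against $\{\log n/(nh_\Gamma^2)\}^{1/2}$ (equal in order when $h_\Gamma^6\approx\log n/n$) gives $h_\Gamma\approx(\log n/n)^{1/6}$ and the rate $(\log n/n)^{1/3}$.

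For part (b) the regime is dense, and the conditions $\mM_n\to\infty$ and $\mM_n^{-1}\lesssim h_\Gamma\approx h_t$ ensure $\gamma_{n1}h_\Gamma\gtrsim1$ and $\gamma_{n2}h_\Gamma^2\gtrsim1$, so $\delta_{n2}^2\approx\log n/n$ and $\delta_n^2=\{1+(\gamma_{n1}h_t)^{-1}\}\log n/(nh_z^p)\approx\log n/(nh_z^p)$. Since $h_\Gamma\lesssim(\log n/n)^{1/4}$ gives $h_t^4\approx h_\Gamma^4\lesssim h_\Gamma^2\lesssim(\log n/n)^{1/2}$, both $h_\Gamma^2$ and $h_t^4$ are absorbed into $\delta_{n2}$, and Theorem~\ref{A:thm2} reduces to $O(h_z^4+\delta_n^2+(\log n/n)^{1/2})$. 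The final claim then follows because $h_z\lesssim(\log n/n)^{1/8}$ makes $h_z^4\lesssim(\log n/n)^{1/2}$, whereas $\delta_n^2\approx\log n/(nh_z^p)\lesssim(\log n/n)^{1/2}$ as long as $h_z$ is not taken too small, namely $h_z\gtrsim(\log n/n)^{1/(2p)}$, which is consistent with the upper constraint when $p\le4$.

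I do not expect any probabilistic obstacle: beyond Theorem~\ref{A:thm2} this is order-of-magnitude bookkeeping. The only mildly delicate points are the correct simplification of $\delta_{n2}$ and $\delta_n$ under each sampling scheme, and, in part (a), recognizing that the stated lower bounds on $h_t$ and $h_z$ are exactly what is needed to make $\delta_n^2$ dominated by $\delta_{n2}$ --- a one-line application of $h_th_z^p\ge\min(h_t,h_z)^{p+1}$.
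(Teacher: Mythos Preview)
Your proposal is correct and follows essentially the same approach as the paper: both parts are obtained by specializing the bound of Theorem~\ref{A:thm2} to the sparse and dense regimes, simplifying $\delta_n$ and $\delta_{n2}$ through the implied orders of $\gamma_{n1},\gamma_{n2}$, and then checking that the stated bandwidth constraints make $h_t^4,h_z^4,\delta_n^2$ dominated by $h_\Gamma^2+\delta_{n2}$ (part~(a)) or by $(\log n/n)^{1/2}+h_z^4+\delta_n^2$ (part~(b)). Your closing observation that the compatibility $h_z\gtrsim(\log n/n)^{1/(2p)}$ with $h_z\lesssim(\log n/n)^{1/8}$ requires $p\le4$ is exactly the restriction anticipated in Remark~3.2, where the corollary is understood under $p\le3$.
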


Once $\Gamma^*$ is estimated, $\hat\phi_k$ can be obtained through
$$
\int_\bbT \hat\Gamma^*(s,t) \hat\phi_k(s) ds = \hat\lambda^*_k \hat\phi_k(t).
$$
Below we provide the asymptotic properties of $\hat\phi_k$ and those of $\hat \lambda^*_k$ for $1\leq k\leq L$.

\begin{thm}\label{A:thm3}
Assume that A.1-A.2 and B.1-B.4 hold, for $1\leq j\leq L$:
\begin{align*}
& \|\hat\phi_j -\phi_j\| = O(h_\Gamma^2+\delta_{n1}+h_t^4+h_z^4+\delta_n^2) \ a.s.. \\
& |\hat\lambda_j^* - \lambda_j^*|  = O\big((\log n/n)^{1/2} +h_\Gamma^2+\delta_{n1}(h_t^2+h_z^2+\delta_n)+h_t^4+h_z^4+\delta_n^2\big) \ a.s..
\end{align*}
\end{thm}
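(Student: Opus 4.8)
\emph{Proof strategy.} This is a perturbation result for the spectrum of the Hilbert--Schmidt operator with kernel $\hat\Gamma^*$, driven by the rate for $\hat\Gamma^*$ in Theorem~\ref{A:thm2}. Write $\Delta=\hat\Gamma^*-\Gamma^*$, viewed both as a symmetric kernel on $\bbT\times\bbT$ and as the induced self-adjoint operator on $L^2(\bbT)$. Under the eigenvalue-separation conditions in Assumptions~B the gaps $g_j:=\min_{k\ne j}|\lambda_j^*-\lambda_k^*|$ are bounded away from $0$ for $j\le L$, and standard first-order perturbation theory \citep{HallMW:06}, with the sign normalization $\langle\hat\phi_j,\phi_j\rangle\ge0$, gives
\begin{align*}
\hat\lambda_j^*-\lambda_j^* &= \langle\Delta\phi_j,\phi_j\rangle + O\big(\|\Delta\|_{\mathrm{op}}^2\big),\\
\|\hat\phi_j-\phi_j\| &\le g_j^{-1}\,\|\Delta\phi_j\| + O\big(\|\Delta\|_{\mathrm{op}}^2\big),
\end{align*}
the second being the direction-localized Davis--Kahan bound (the linear term equals $-(\Gamma^*-\lambda_j^*)^{\dagger}$ applied to the projection of $\Delta\phi_j$ onto $\{\phi_j\}^{\perp}$, an operator of norm $g_j^{-1}$ there). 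By compactness of $\bbT$, $\|\Delta\|_{\mathrm{op}}\le|\bbT|\sup_{s,t}|\Delta(s,t)|$, so Theorem~\ref{A:thm2} makes the quadratic remainders $O\big((\delta_{n2}+h_\Gamma^2+h_t^4+h_z^4+\delta_n^2)^2\big)$, which under the bandwidth conditions in Assumptions~B is dominated by the claimed rates. It thus remains to bound $\|\Delta\phi_j\|_{L^2(\bbT)}$ and $\langle\Delta\phi_j,\phi_j\rangle$ --- quantities where integrating a coordinate of $\Delta$ against a \emph{smooth} eigenfunction averages away the kernel localization that inflates $\sup_{s,t}|\Delta|$.

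To do this I would reopen the asymptotic linear expansion of the local-linear estimator from the proof of Theorem~\ref{A:thm2}, schematically $\Delta(s,t)=B(s,t)+\frac1n\sum_i\frac1{N_i(N_i-1)}\sum_{j'\ne l}\big(U_{ij'}U_{il}-\Gamma(t_{ij'},t_{il},\bz_i)\big)w_{ij'l}(s,t)+R_\mu(s,t)$, with $B$ the $O(h_\Gamma^2)$ smoothing bias, $U_{ij}=Y_{ij}-\mu(t_{ij},\bz_i)$, $w_{ij'l}(s,t)$ the effective local-linear weights (of order $K_{h_\Gamma}(t_{ij'}-s)K_{h_\Gamma}(t_{il}-t)$ divided by design densities, up to local-polynomial corrections), and $R_\mu$ the terms from replacing $U$ by $\hat U=Y-\hat\mu$, which Theorem~\ref{A:thm1} controls in $\sup$-norm at order $\delta_{n1}(h_t^2+h_z^2+\delta_n)+h_t^4+h_z^4+\delta_n^2$. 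Pushing $\int_\bbT\!\cdot\,\phi_j(s)\,ds$ inside and using $\int K_{h_\Gamma}(u-s)\phi_j(s)\,ds=\phi_j(u)+O(h_\Gamma^2)$ turns the $s$-localized sum into a single-kernel average $\frac1n\sum_i\frac1{N_i(N_i-1)}\sum_{j'\ne l}(U_{ij'}U_{il}-\Gamma)\phi_j(t_{ij'})\tilde K_{h_\Gamma}(t_{il}-t)$ (plus $O(h_\Gamma^2)$). Redoing the uniform-in-$t$ exponential inequality for this average, the within-subject ``doubly localized'' pairs that produced the $(\gamma_{n2}h_\Gamma^2)^{-1}$ term in $\delta_{n2}^2$ drop out, leaving variance of order $\{1+(\gamma_{n1}h_\Gamma)^{-1}\}\log n/n=\delta_{n1}^2$, while $R_\mu$, now carrying a single kernel, contributes $O(\delta_{n1}(h_t^2+h_z^2+\delta_n))=o(\delta_{n1})$. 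Since $\|\Delta\phi_j\|\le|\bbT|^{1/2}\sup_t|\int_\bbT\Delta(s,t)\phi_j(s)\,ds|$, this yields $\|\Delta\phi_j\|=O(h_\Gamma^2+\delta_{n1}+h_t^4+h_z^4+\delta_n^2)$ a.s., hence the stated rate for $\hat\phi_j$.

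For $\hat\lambda_j^*$ one integrates in \emph{both} coordinates against $\phi_j$: both kernels vanish, $\langle\Delta\phi_j,\phi_j\rangle$ reduces to $B$-type bias $O(h_\Gamma^2)$, the subject-level average $\frac1n\sum_i\frac1{N_i(N_i-1)}\sum_{j'\ne l}(U_{ij'}U_{il}-\Gamma)\phi_j(t_{ij'})\phi_j(t_{il})$, whose variance is $O(1/n)$ because the within-subject dependence of the scores $A_{ik}$ prevents further shrinkage, hence $O((\log n/n)^{1/2})$ a.s., and the doubly-integrated $R_\mu$, which we simply bound by its $\sup$-norm value from Theorem~\ref{A:thm2}; summing gives the stated rate for $\hat\lambda_j^*$.

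The main obstacle is this integrated-error step: one must re-enter the proof of Theorem~\ref{A:thm2} and redo the second-moment/Bernstein bookkeeping for the U-statistic-type averages after one (or two) kernels are integrated out, tracking that it is the fourth-order within-subject moments of the $A_{ik}$ that decide whether a residual variance is of order $(\gamma_{n1}h_\Gamma)^{-1}n^{-1}$ or only $n^{-1}$, and checking that neither the local-polynomial corrections hidden in $w_{ij'l}$ nor the propagated-mean term $R_\mu$ secretly reintroduces a second kernel. A secondary point is that the expansion for $\hat\phi_j$ needs a remainder uniform over the infinitely many directions $k\ne j$; this is where the operator-norm remainder of the first step, the uniform eigengap for $j\le L$, and $\sum_k\lambda_k^*<\infty$ (boundedness of the resolvent of $\Gamma^*$ on $\{\phi_j\}^{\perp}$) are used.
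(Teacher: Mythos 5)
Your proposal is correct and follows essentially the same route as the paper: a first-order perturbation expansion in the spirit of Hall and Hosseini-Nasab, with the key step being that integrating $\Delta=\hat\Gamma^*-\Gamma^*$ against a smooth eigenfunction in one coordinate removes one kernel localization (improving $\delta_{n2}$ to $\delta_{n1}$, which is exactly the paper's Lemma on $\sup_t|(\triangle\psi)(t)|$), and integrating in both coordinates reduces the stochastic term to a plain subject-level average of rate $(\log n/n)^{1/2}$. The paper controls the eigenfunction remainder via Bessel's inequality on the series expansion rather than an explicit Davis--Kahan/resolvent bound, but this is the same argument in substance.
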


Again, we elaborate on the convergence rates of $\hat\phi_j$ in regards to different sampling schemes.

\begin{cor} \label{A:cor3}
Assume that A.1-A.4 and B.1-B.3 hold. \begin{itemize}
\item[(a)] If $\max_{1\leq i\leq n} N_i \leq \mM$ for some fixed $\mM$ and $\{(\log n/n)h_\Gamma\}^{1/(2p+2)} \lesssim h_t, h_z \lesssim h_\Gamma^{1/2} $,
\begin{equation}
\|\hat\phi_j -\phi_j\| = O\left( h_\Gamma^2 + \{\log n/(nh_\Gamma)\}^{1/2} \right) \ a.s..
\end{equation}
The optimal convergence rate is $(\log n/n)^{2/5}$.


\item[(b)] If $\max_{1\leq i\leq n} N_i \geq \mM_n$, where $\mM_n^{-1} \lesssim h_\Gamma \approx h_t \lesssim (\log n/n)^{1/4}$ is bounded away from zero,
\begin{equation}
\|\hat\phi_j -\phi_j\| = O\left((\log n/n)^{1/2}+h_z^4+\delta_n^2\right) \ a.s..
\end{equation}
The optimal convergence rate is $( \log n/n )^{1/2}$ given $h_z \lesssim (\log n/n)^{1/8}$. 
\end{itemize}
\end{cor}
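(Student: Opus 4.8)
The plan is to obtain both parts as a direct specialization of Theorem~\ref{A:thm3}: the substantive content is already in that theorem, and all that remains is to evaluate $\gamma_{n1}$, $\delta_{n1}$ and $\delta_n$ in each of the two sampling regimes and then use the stated bandwidth inequalities to discard the lower-order terms in the bound $\|\hat\phi_j-\phi_j\| = O(h_\Gamma^2+\delta_{n1}+h_t^4+h_z^4+\delta_n^2)$ a.s.

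For part (a), since $1\le N_i\le \mM$ one has $1\le \gamma_{n1}\le \mM$, so $\gamma_{n1}\asymp 1$; because $h_\Gamma\to 0$ the term $1/(\gamma_{n1}h_\Gamma)$ dominates the additive $1$, giving $\delta_{n1}\asymp\{\log n/(nh_\Gamma)\}^{1/2}$, and the same reasoning gives $\delta_n\asymp\{\log n/(nh_th_z^p)\}^{1/2}$. The upper bound $h_t,h_z\lesssim h_\Gamma^{1/2}$ yields $h_t^4+h_z^4\lesssim h_\Gamma^2$. For the variance term, the lower bound $h_t,h_z\gtrsim\{(\log n/n)h_\Gamma\}^{1/(2p+2)}$ gives $h_th_z^p\gtrsim\{(\log n/n)h_\Gamma\}^{(p+1)/(2p+2)}=\{(\log n/n)h_\Gamma\}^{1/2}$, whence $\delta_n^2\lesssim\{\log n/(nh_\Gamma)\}^{1/2}=\delta_{n1}$; so the whole bound collapses to $O(h_\Gamma^2+\{\log n/(nh_\Gamma)\}^{1/2})$. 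Balancing the two surviving terms, i.e. taking $h_\Gamma^4\asymp\log n/(nh_\Gamma)$ so that $h_\Gamma\asymp(\log n/n)^{1/5}$, produces the optimal rate $(\log n/n)^{2/5}$; one checks this choice lies in the admissible window (which is nonempty for $p\le 5$).

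For part (b), the densely-sampled hypothesis pins $\gamma_{n1}$ to order $\mM_n$, and $\mM_n^{-1}\lesssim h_\Gamma\asymp h_t$ makes $1/(\gamma_{n1}h_\Gamma)$ and $1/(\gamma_{n1}h_t)$ both $O(1)$, so $\delta_{n1}\asymp(\log n/n)^{1/2}$ and $\delta_n\asymp\{\log n/(nh_z^p)\}^{1/2}$. Substituting into Theorem~\ref{A:thm3} and using $h_\Gamma\asymp h_t\lesssim(\log n/n)^{1/4}$ to absorb $h_\Gamma^2$ and $h_t^4$ into $(\log n/n)^{1/2}$ leaves $O((\log n/n)^{1/2}+h_z^4+\delta_n^2)$; imposing $h_z\lesssim(\log n/n)^{1/8}$ then kills $h_z^4$, while $\delta_n^2=\log n/(nh_z^p)$ is of order at most $(\log n/n)^{1/2}$ as soon as $h_z\gtrsim(\log n/n)^{1/(2p)}$, so the optimal rate is $(\log n/n)^{1/2}$ (the window being nonempty for $p\le 4$).

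The steps are routine; the only genuine obstacle is the bookkeeping of which term dominates, in particular checking that the exponent $1/(2p+2)$ in the lower bound on $h_t,h_z$ is \emph{exactly} what forces $\delta_n^2$ below $\delta_{n1}$ in case (a) (and below $(\log n/n)^{1/2}$ in case (b)), and verifying that each admissible bandwidth window is nonempty over the relevant range of $p$ (the paper's standing restriction $p\le 3$, with larger $p$ deferred to Section~3.4). I would also note that the perturbation-theoretic passage from a sup-norm bound on $\hat\Gamma^*-\Gamma^*$ to an $L^2$ bound on $\hat\phi_j-\phi_j$ — via a Davis--Kahan/$\sin\theta$ argument using the eigenvalue-gap conditions in the assumptions — is already performed inside the proof of Theorem~\ref{A:thm3}, so it need not be repeated here.
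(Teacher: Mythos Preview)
Your proposal is correct and follows essentially the same approach the paper takes: the corollary is obtained by specializing the general bound $\|\hat\phi_j-\phi_j\|=O(h_\Gamma^2+\delta_{n1}+h_t^4+h_z^4+\delta_n^2)$ from Theorem~\ref{A:thm3} to the two sampling regimes, evaluating $\gamma_{n1}$, $\delta_{n1}$ and $\delta_n$ accordingly, and then using the bandwidth constraints to absorb the mean-related terms into the dominant ones. Your bookkeeping of which exponent in the bandwidth window forces $\delta_n^2\lesssim\delta_{n1}$ in (a) and $\delta_n^2\lesssim(\log n/n)^{1/2}$ in (b) is accurate, and your parenthetical remarks on the admissible range of $p$ line up with the paper's discussion in Section~3.4.
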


\subsection{Covariate-specific Eigenvalues}

Given that $\Ex(A_{ik}^2) = \lambda_k(\bz_i)$, $\lambda_k(\bz)$ can be estimated via applying a $p$-dimensional smoother to $\{(A_{ik}^2,\bz_i)\mid 1\leq i\leq n\}$. However, this is only feasible for dense data. When the data are sparse, such as longitudinal observations, the PC scores can not be predicted accurately via numerical approaches for integration, and instead PACE \citep{YaoMW:05:1} is usually considered. However, applying a $p$-dimensional smoother to the squared PC scores predicted by PACE may not be appropriate as $\var\{\Ex(A_{ik}|\bm y_i)\} \leq \var(A_{ik})=\lambda_k(\bz_i)$, where $\Ex(A_{ik}|\bm y_i)$ is the PACE predictor of $A_{ik}$.
To solve this issue, we propose a WLS procedure to estimate the eigenvalues $\lambda_k(\bz)$ for both sparse and dense data. An alternative, exclusively for dense data, is provided in Appendix \ref{app:PC} and is named as the PC-based approach.

Let $\mX_i = \{ \phi_\ell(t_{i,j})\phi_\ell(t_{i,k})\}$, where $j<k$, be a $[N_i(N_i-1)/2]\times L_n$ matrix, $\mY_i = (C_{i,12},\ldots, C_{i,(N_i-1)(N_i)})^T$, $C_{i,jk}=U_{ij}U_{ik}$, $U_{ij} = Y_{ij}-\mu(t_{ij},\bz_i)$, and $\blambda_z = (\lambda_1(\bz),\ldots,\lambda_{L_n}(\bz))^T$. Thus, $\mY_i$ can be represented as
\begin{equation*}
\mY_i = \mX_i \blambda_{z_i} + \bm\epsilon_i,
\end{equation*}
where $\bm\epsilon_i$ is the remainder term. Given that $\lambda_k$'s are smooth functions of $\bz$, it is thus reasonable to estimate $\blambda_z$ through
\begin{equation} \label{eq:lda_wls}
\hat{\blambda}_z^W = (\hat{\mX}^T\mathcal{W}_z\hat{\mX})^{-1}(\hat{\mX}^T\mathcal{W}_z\hat{\mY}),
\end{equation}
where $\hat{\mY} = (\hat{\mY}_1^T,\ldots,\hat{\mY}_n^T)^T$, $\hat{\mX} = (\hat{\mX}_1^T,\ldots,\hat{\mX}_n^T)^T$,
$\hat{\mX}_i = \{ \hat\phi_\ell(t_{i,j})\hat\phi_\ell(t_{i,k})\}_{[N_i(N_i-1)/2]\times L_n}$,  $\hat{\mY}_i = (\hat C_{i,12},\ldots, \hat C_{i,(N_i-1),N_i})^T$, $\hat C_{i,jk}  =  \hat U_{i,j} \hat U_{i,k}$, $\mathcal{W}_z = \text{diag}(\bm w_1^T,\ldots,\bm w_n^T)$, and $\bm w_i$ is a vector with $N_i(N_i-1)/2$ identical elements, $\prod_{k=1}^p K_{h_\lambda^{(k)}}(z_i^{(k)}-z^{(k)})$.

\begin{rmk}
Given that the order of $\hat\lambda_k(\bz)$ is determined by that of $\hat\lambda^*_k$, the estimated eigenvalues of the pooled covariance, the assumption $\lambda_1(\bz)>\lambda_2(\bz)>\ldots \geq 0$ for $\bz\in\bbZ$ in Model (\ref{eq:cov}) implies that the correct order is expected. However, what really matters is that these estimated covariate-specific eigenvalues correspond to the same order of estimated eigenfunctions (a common reference frame) allowing the comparison and statistical analysis of these covariate-specific eigenvalues. The assumption $\lambda_1(\bz)>\lambda_2(\bz)>\ldots \geq 0$ for $\bz\in\bbZ$ could be slightly relaxed; specifically, the order of $\lambda_k(\bz)$ could vary with $\bz$ as long as there are no tie for $\lambda_k^*$ to ensure the estimate of $\phi_k$.
\end{rmk}

Here $L_n$ could be a slowly divergent sequence if the following condition on $\lambda_k(\bz)$ is satisfied:
\begin{equation}\label{B.7}
\lambda_j(\bz)>\lambda_{j+1}(\bz)>0, E(A_{j}^4) \leq c \lambda_j(\bz) \text{, and } \lambda_j(\bz) -\lambda_{j+1}(\bz) > c^{-1}j^{-(a_1+1)},
\end{equation}
for $\bz\in \bbZ$, $a_1>1$, and some $c>0$. A similar assumption can be found in the literature of functional regression (e.g., \cite{HallH:07} and \cite{YaoLW:15}). 

\begin{thm} \label{A:thm4}
Under assumptions A.1--A.2, and B.1--B.7, we have
\begin{align} \label{eq:ldaz}
\sup_{\bz\in\bbZ} | \hat\blambda_z^W - \blambda_z | & = O\big(L_n^{-a_1}+L_n h_\lambda^2+L_n^{2}(h_\Gamma^2+h_t^4+h_z^4+\delta_{n2}+\delta_n^2) \nonumber \\
& \hspace{15mm} +h_t^2+h_z^2+\delta_{n}+1/(nh_\lambda^p)^{1/2}\big) \ a.s..
\end{align}
\end{thm}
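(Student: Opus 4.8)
The plan is to decompose $\hat\blambda_z^W - \blambda_z$ into three contributions: (i) a \emph{truncation/approximation bias} arising from working with only the first $L_n$ eigencomponents so that $\mathcal Y_i = \mathcal X_i\blambda_{z_i}+\bm\epsilon_i$ has a nonzero remainder $\bm\epsilon_i$; (ii) a \emph{smoothing bias} from the local-constant weights $\mathcal W_z$ pooling information from covariate values $\bz_i$ near $\bz$ where the true $\lambda_k(\cdot)$ differ; and (iii) a \emph{stochastic/estimation error} combining the variance of the weighted least squares with the plug-in errors from replacing $\mu$, $\phi_k$ (and hence $\mathcal X_i$, $\mathcal Y_i$) by $\hat\mu$, $\hat\phi_k$. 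The identity $\hat\blambda_z^W-\blambda_z = (\hat{\mathcal X}^T\mathcal W_z\hat{\mathcal X})^{-1}\hat{\mathcal X}^T\mathcal W_z(\hat{\mathcal Y}-\hat{\mathcal X}\blambda_z)$ organizes everything; writing $\hat{\mathcal Y}-\hat{\mathcal X}\blambda_z = (\hat{\mathcal Y}-\mathcal Y) + (\mathcal X-\hat{\mathcal X})\blambda_z + (\mathcal Y-\mathcal X\blambda_z)$ separates the plug-in perturbations from the intrinsic residual $\bm\epsilon_i$.

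First I would control the normalized Gram matrix $G_z := (n h_\lambda^p)^{-1}\hat{\mathcal X}^T\mathcal W_z\hat{\mathcal X}$. Under the orthonormality of the $\phi_k$ and Assumption B.7 (which forces the spectral gaps $\lambda_j(\bz)-\lambda_{j+1}(\bz)\gtrsim j^{-(a_1+1)}$ and an $L^4$ bound), one shows $G_z$ is, up to the smoothing weights, close to $\Ex[N_i(N_i-1)/2]$ times the $L_n\times L_n$ matrix with entries built from $\int\phi_\ell\phi_{\ell'}\int\phi_m\phi_{m'}$ evaluated at the design density, hence its smallest eigenvalue is bounded below, with the plug-in error in $\hat\phi_\ell$ (of order given by Theorem~\ref{A:thm3}) contributing a multiplicative $L_n^2$ inflation — this is why $L_n^2$ multiplies $h_\Gamma^2+h_t^4+h_z^4+\delta_{n2}+\delta_n^2$ in the stated rate. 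Then, for the numerator: the truncation remainder $\bm\epsilon_i$ contributes $\|\sum_{k>L_n}\lambda_k(\bz_i)\phi_k\phi_k\|$-type terms, which under B.7 are $O(L_n^{-a_1})$; the local-constant smoothing of the $\lambda_k$ over the $h_\lambda$-neighborhood gives the $L_n h_\lambda^2$ bias (each of $L_n$ coordinates picks up an $O(h_\lambda^2)$ Taylor bias, propagated through $G_z^{-1}$); the WLS stochastic term gives the $(n h_\lambda^p)^{-1/2}$ rate by a standard kernel-average concentration with a $\log n$ factor absorbed; and the plug-in errors $\hat\mu-\mu$, $\hat\phi-\phi$ feed in the $h_t^2+h_z^2+\delta_n$ and the $L_n^2(\cdots)$ terms via Theorems~\ref{A:thm1}--\ref{A:thm3}. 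One assembles these and notes $a.s.$ statements follow because each ingredient was already $a.s.$ in the cited theorems and the kernel averages admit exponential/Borel--Cantelli bounds.

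I expect the main obstacle to be twofold. The first is the careful bookkeeping of how the eigenfunction estimation error propagates simultaneously into $\hat{\mathcal X}$ appearing in the Gram matrix \emph{and} in the cross term $\hat{\mathcal X}^T\mathcal W_z\hat{\mathcal Y}$, while keeping track of the $L_n$-dependence: a naive bound gives worse powers of $L_n$, so one must exploit cancellation (e.g.\ that $\hat\phi_\ell-\phi_\ell$ is nearly orthogonal to $\phi_\ell$, and summability of $\|\hat\phi_\ell-\phi_\ell\|$ weighted by the gaps) to land exactly at $L_n^2$. The second is verifying the lower bound on $\lambda_{\min}(G_z)$ uniformly in $\bz\in\bbZ$ despite the divergence of $L_n$ — one needs the design density to be bounded away from zero (Assumption B) and a quantitative version of the fact that a Gram matrix of products of orthonormal functions is well-conditioned, plus a uniform-in-$\bz$ argument (covering $\bbZ$ by an $O(h_\lambda)$-net and using Lipschitz continuity of the kernel weights) to convert pointwise bounds into the $\sup_{\bz}$ statement. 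Once these are in place, the remaining steps are the routine bias/variance kernel computations sketched above.
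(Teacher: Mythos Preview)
Your decomposition via $\hat\blambda_z^W-\blambda_z = (\hat{\mathcal X}^T\mathcal W_z\hat{\mathcal X})^{-1}\hat{\mathcal X}^T\mathcal W_z(\hat{\mathcal Y}-\hat{\mathcal X}\blambda_z)$, with $\hat{\mathcal Y}-\hat{\mathcal X}\blambda_z$ split into plug-in, smoothing-bias, truncation and stochastic pieces, is the natural route and matches how the paper (whose detailed argument is deferred to the supplement) assembles the rate from Theorems~\ref{A:thm1}--\ref{A:thm3} together with a Bernstein/Borel--Cantelli lemma for the kernel-weighted sums over $\bbZ$.

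Two points in your bookkeeping deserve care before you write this out. First, on the $L_n^{-a_1}$ truncation term: you derive it from $\sum_{k>L_n}\lambda_k(\bz_i)\phi_k\phi_k$, but condition~(\ref{B.7}) gives only a \emph{lower} bound on the gaps, hence $\lambda_j\gtrsim j^{-a_1}$, not an upper bound. The $L_n^{-a_1}$ rate requires the matching decay $\lambda_j\lesssim j^{-(a_1+1)}$, which is the Hall--Horowitz convention the paper alludes to; state it explicitly. Second, on the $L_n^2$ factor: your proposed mechanism (``summability of $\|\hat\phi_\ell-\phi_\ell\|$ weighted by the gaps'') is on the right track, but make sure you actually use the cancellation $\lambda_\ell\cdot(\lambda_\ell^*-\lambda_{\ell+1}^*)^{-1}=O(1)$ when bounding $\sum_{\ell\le L_n}\lambda_\ell(\bz)\{\hat\phi_\ell\hat\phi_\ell-\phi_\ell\phi_\ell\}$; a naive sum of $\|\hat\phi_\ell-\phi_\ell\|\lesssim \ell^{a_1+1}\|\hat\Gamma^*-\Gamma^*\|$ alone would give $L_n^{a_1+2}$, not $L_n^2$. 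Note also that the displayed rate carries $\delta_{n2}$ (the covariance rate of Theorem~\ref{A:thm2}) rather than $\delta_{n1}$ (the eigenfunction rate of Theorem~\ref{A:thm3}), which signals that the cleanest path controls the plug-in perturbation through $\sup_{s,t}|\hat\Gamma^*-\Gamma^*|$ directly, not eigenfunction by eigenfunction. With those two refinements your outline goes through.
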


Generally, $(h_t+h_\Gamma)\lesssim (h_z+h_\lambda)$ and we can obtain the following corollary for sparse data.
\begin{cor} \label{A:cor4}
Under assumptions A.1--A.2, and B.1--B.7,
\begin{equation} \label{eq:ldaz2}
\sup_{\bz\in\bbZ} |\hat\blambda_z^W - \blambda_z| = O\big(L_n^{-a_1} + L_n h_\lambda^2+ L_n^{2}(h_\Gamma^2 +\delta_{n2}) \big) \ a.s..
\end{equation} 
If further $h_\Gamma^2 \approx \delta_{n2}$, i.e., $ h_\Gamma \approx (\log n/n)^{1/6}$, $L_n h_\lambda^2 \approx L_n^{2}h_\Gamma^2$, and $L_n^{2}h_\Gamma^2\approx L_n^{-a_1}$, the optimal convergence rate is $(\log n/n)^{a_1/3(a_1+2)}$.
\end{cor}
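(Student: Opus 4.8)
The plan is to read off \eqref{eq:ldaz2} from the general bound \eqref{eq:ldaz} of Theorem~\ref{A:thm4} by specializing to the sparse regime and the standing bandwidth ordering $(h_t+h_\Gamma)\lesssim(h_z+h_\lambda)$, and then to obtain the stated rate by an elementary balancing of the surviving terms. No fresh probabilistic estimate is required: Theorem~\ref{A:thm4} already supplies the almost-sure bound, so the whole argument is the bookkeeping that collapses \eqref{eq:ldaz} to $L_n^{-a_1}+L_n h_\lambda^2+L_n^2(h_\Gamma^2+\delta_{n2})$, followed by an optimization over $h_\Gamma$, $h_\lambda$ and $L_n$.

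First I would prune the terms carried by the factor $L_n^2$. For sparse data ($\max_i N_i\le\mM$, so $\gamma_{n1},\gamma_{n2}$ are bounded away from $0$ and $\infty$) the constraints inherited from Corollaries~\ref{A:cor2}--\ref{A:cor3} give $h_t,h_z\lesssim h_\Gamma^{1/2}$, whence $h_t^4+h_z^4\lesssim h_\Gamma^2$, while the convention $\delta_n^2\lesssim\delta_{n2}$ (legitimate for $p\le 3$, as noted in the remark after Theorem~\ref{A:thm2}) gives $L_n^2\delta_n^2\lesssim L_n^2\delta_{n2}$; the bracket therefore reduces to $h_\Gamma^2+\delta_{n2}$. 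Next I would absorb the stand-alone lower-order terms $h_t^2+h_z^2+\delta_n+(nh_\lambda^p)^{-1/2}$ into the retained terms. Using $L_n\ge 1$ together with the ordering in the form $h_t\lesssim h_\Gamma$ and $h_z\lesssim h_\lambda$ yields $h_t^2\lesssim h_\Gamma^2\le L_n^2 h_\Gamma^2$ and $h_z^2\lesssim h_\lambda^2\le L_n h_\lambda^2$; in the sparse regime $\delta_n\lesssim\delta_{n2}\le L_n^2\delta_{n2}$; and the $p$-dimensional WLS variance $(nh_\lambda^p)^{-1/2}$ is of smaller order than $L_n h_\lambda^2+L_n^2\delta_{n2}$. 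Collecting these bounds delivers \eqref{eq:ldaz2}.

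For the optimal rate I would then impose the three balancing relations. Setting $h_\Gamma^2\approx\delta_{n2}$ and using $\delta_{n2}\approx\{\log n/(nh_\Gamma^2)\}^{1/2}$ in the sparse regime gives $h_\Gamma^3\approx(\log n/n)^{1/2}$, hence $h_\Gamma\approx(\log n/n)^{1/6}$ and $h_\Gamma^2\approx\delta_{n2}\approx(\log n/n)^{1/3}$; choosing $h_\lambda$ so that $L_n h_\lambda^2\approx L_n^2 h_\Gamma^2$ and equating $L_n^2 h_\Gamma^2\approx L_n^{-a_1}$ gives $L_n^{a_1+2}\approx(\log n/n)^{-1/3}$, i.e.\ $L_n\approx(n/\log n)^{1/(3(a_1+2))}$, so that every retained term equals $L_n^{-a_1}\approx(\log n/n)^{a_1/(3(a_1+2))}$. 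I expect the second step, not the arithmetic of the third, to be the delicate point: the stand-alone terms carry no $L_n$ factor (or only $L_n^1$) whereas the surviving bias terms carry $L_n^2$, so their domination rests on the interplay between the slowly divergent $L_n$ and the bandwidth orders. In particular, controlling $\delta_n$ and $(nh_\lambda^p)^{-1/2}$ is exactly where the restriction to small $p$ and the ordering $(h_t+h_\Gamma)\lesssim(h_z+h_\lambda)$ are used, and one must also check that the three balancing relations admit a common solution with vanishing bandwidths respecting that ordering (here $h_\lambda^2\approx L_n(\log n/n)^{1/3}\to 0$ since $L_n$ grows strictly slower than $(n/\log n)^{1/3}$).
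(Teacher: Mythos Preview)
Your approach is correct and matches the paper's: specialize the bound \eqref{eq:ldaz} of Theorem~\ref{A:thm4} to the sparse regime under the stated bandwidth ordering, absorb the lower-order pieces, and then balance $L_n^{-a_1}$, $L_n h_\lambda^2$ and $L_n^2(h_\Gamma^2+\delta_{n2})$ exactly as you describe; the arithmetic yielding $h_\Gamma\approx(\log n/n)^{1/6}$, $L_n\approx(n/\log n)^{1/(3(a_1+2))}$ and the rate $(\log n/n)^{a_1/(3(a_1+2))}$ is right.

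One point of care: the ordering $(h_t+h_\Gamma)\lesssim(h_z+h_\lambda)$ groups bandwidths by \emph{coordinate direction} ($t$ versus $\bz$), not by estimator, so it asserts $h_t,\,h_\Gamma\lesssim h_z+h_\lambda$ rather than your reading ``$h_t\lesssim h_\Gamma$ and $h_z\lesssim h_\lambda$''. Under the intended reading one gets $h_t^2\lesssim h_\lambda^2\le L_n h_\lambda^2$ directly, while the stand-alone $h_z^2$ does not fall out of the ordering alone; it is absorbed via the Corollary~\ref{A:cor2}(a) window $h_z\lesssim h_\Gamma^{1/2}$ combined with an admissible (sufficiently small) choice of $h_z$, which is compatible with $\delta_n^2\lesssim\delta_{n2}$ precisely when $p\le 3$. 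This is a labeling slip rather than a gap in the strategy.
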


\subsection{When $p\geq 4$}
Corollaries \ref{A:cor2} -- \ref{A:cor4} provide the optimal convergence rates for $p\leq 3$. The optimal convergence rates for a general $p$ might be of interest to other examples and below we provide some discussions.
\begin{itemize}
\item[--] \textbf{Extension of Corollary \ref{A:cor2}}\\ The optimal convergence rate of (a) can only be achieved when $p\leq 7$. When $p\geq 8$ and $h_\Gamma^{1/2} \lesssim h_t, h_z \lesssim \{(\log n/n)h_\Gamma^2\}^{1/(2p+2)}$,  the optimal convergence rate is $(\log n/n )^{4/(p+5)}$. Under the assumptions of case (b), the optimal convergence rate is $(\log n/n)^{4/(4+p)}$ and $h_z \approx (\log n/n)^{1/(4+p)}$ when $p\geq 4$.
\item[--] \textbf{Extension of Corollary \ref{A:cor3}}\\ Under the assumptions of case (a), the optimal convergence rate, $( \log n/n )^{2/5}$, is achieved when $p\leq 5$. When $p>5$, the optimal convergence rate is $(\log n/n)^{4/(p+5)}$ if the bandwidths satisfy $h_\Gamma^{1/2} \lesssim h_t, h_z \lesssim \{(\log n/n)h_\Gamma \}^{1/(2p+2)}$. Under the assumptions of case (b) and and $h_z \approx ( \log n/n)^{1/(4+p)}$, the optimal convergence rate is $( \log n/n)^{4/(4+p)}$  when $p\geq 4$.
\item[--] \textbf{Extension of Corollary \ref{A:cor4}}\\ The optimal convergence rate is achieved when $p\leq 7$. When $p\geq 8$,
the order of (\ref{eq:ldaz2}) becomes $O(L_n h_\lambda^2+ L_n^{2}(h_z^4 +\delta_{n}^2)+ L_n^{-a_1}+\delta_{n})$. If further $h_z^4 \approx \delta_{n}^2$ (i.e., $h_z \gtrsim (\log n/n)^{1/(p+5)}$), $L_n h_\lambda^2 \approx L_n^2 h_z^4$ and $L_n h_z^4 \approx L_n^{-a_1}$ and $L_n^2\delta_n^2 \gtrsim \delta_n$, $L_n \gtrsim (n/\log n)^{3/2(p+5)}$ and the convergence rate is $(\log n/n)^{3a_1/2(p+5)}$.
\end{itemize}

\section{SIMULATION STUDIES}\label{sec:simulations}

This section consists of three simulation studies. Simulation 1 compares the two eigenvalue estimators (the PC-based approach and the proposed WLS approach) under the second scenario with two sampling schemes. Simulation 2 shows that when estimating the covariance function $\Upsilon(s,t,\bz) = \sum_{k=1}^\infty \tau_k(\bz)\psi_k(s,\bz)\psi_k(t,\bz)$ is not computationally feasible, approximating $\Upsilon$ with the common covariance function $\Gamma(s,t,\bz) = \sum_{k=1}^\infty \lambda_k(\bz)\phi_k(s)\phi_k(t)$ is a better choice than doing it with the pooled covariance function $\Gamma^*(s,t) = \sum_{k=1}^\infty \lambda^*_k\phi_k(s)\phi_k(t)$. Simulation 3 comprises four settings of data generation under the third scenario and is aimed to show that the proposed WLS estimator helps reveal the latent group information of each function and thus significantly improves clustering accuracy. In the following numerical studies, we employed PACE version 2.17 (available at \url{http://www.stat.ucdavis.edu/PACE/}) to estimate the eigenfunctions.


\subsection{Simulation 1}

For each run we generated $n$ (=200 and 400) curves from model (\ref{eq:cfpca2}). For $i=1,\ldots,n$, we let $z_i\sim U(0,1)$, $\mu(t,z_i)=0$, $\phi_1(t)=-\cos(\pi t/10)/\sqrt{5}$, $\phi_2(t)=\sin(\pi t/10)/\sqrt{5}$, $A_{ik}(z_i) \sim N(0,\lambda_k(z_i))$ for $k=1,2$, where $\lambda_1(z_i)=4\{1+2\sin(0.1+\pi z_i^2/2)\}$ and $\lambda_2(z_i)= 2\{2+\sin(2z_i\pi)\}$, and $\varepsilon_{ij}\sim N(0,1)$. For complete data, the observations were made at 51 equally spaced time points over $[0,10]$ per curve. To generate sparse data, we randomly selected $N_i$ points out of the 51 equally spaced time points for complete data, where $N_i$ were randomly selected from $\{4,5,\ldots,10\}$ for each curve. Each simulation consists of 100 runs.

For fair comparisons, we let the bandwidth in (\ref{eq:lda_wls}) be $0.2$ for both types of data. 
In the PC-based approach, we first predicted $A_{ik}$ via the trapezoidal rule and PACE, for functional data and longitudinal data, respectively; next, we applied a $p$-dimensional smoother to $\{(\hat A_{i,k}^2, \bz_i), i=1,\ldots,n\}$ to estimate $\lambda_k(\bz)$. The integrated squared error (ISE) of the PC-based approach and that of the proposed WLS approach were summarized in Table \ref{mise_sim3}, indicating that the proposed WLS approach outperforms the PC-based approach under both sampling schemes. Interestingly, the proposed WLS approach outperforms the PC-based approach even in the complete data setting. Intuitively, this could be explained by the fact that, in the proposed WLS approach, ``more observations'' are employed to directly estimate the eigenvalues, i.e., $\sum_{i=1}^n N_i(N_i-1)/2$ (with $N_i=51$) v.s. $n$. This effect will disappear as $n$ gets much larger.

\begin{table}
\caption{\label{mise_sim3}Average ISE of the estimated eigenvalues with standard errors in the parentheses when $n=200$ and $n=400$.}
\centering\scriptsize
\begin{tabular}{ccccc|cccc} \hline
& \multicolumn{4}{c|}{Complete} & \multicolumn{4}{c}{Sparse} \\
  & \multicolumn{2}{c}{n=200} & \multicolumn{2}{c|}{n=400} & \multicolumn{2}{c}{n=200} & \multicolumn{2}{c}{n=400} \\ \cline{2-9}
Method  &  $\lambda_1(\bz)$  & $\lambda_2(\bz) $ & $\lambda_1(\bz)$  & $\lambda_2(\bz) $ &  $\lambda_1(\bz)$  & $\lambda_2(\bz) $ & $\lambda_1(\bz)$  & $\lambda_2(\bz) $  \\ \hline
 PC-based &  4.28 (3.45) & 0.99 (0.72)  & 1.99 (1.45) & 0.57 (0.38) & 5.63 (4.02) & 2.01 (1.30) & 3.29 (2.29) & 1.57 (0.82) \\
 WLS& 2.93 (2.39) & 0.75 (0.51) & 1.49 (1.07) & 0.47 (0.29) & 4.46 (3.66) & 1.79 (1.38)  & 2.49 (1.83) & 0.91 (0.60) \\ \hline
\end{tabular}
\end{table}

The supplementary material provides  an additional variation of simulation 1, which demonstrates the dependency of the estimation (ISE) on $p$ as shown in Section \ref{sec:estimation}.

\subsection{Simulation 2}

Understanding the covariance structure is essential in some studies, such as brain functional connectivity analyses, but directly estimating the covariance $\Upsilon(s,t,\bz) = \sum_{k=1}^\infty \eta_k(\bz)\psi_k(s,\bz)\psi_k(t,\bz)$ may not be computationally feasible. Here, we demonstrate that approximating $\Upsilon$ with $\Gamma(s,t,\bz) = \sum_{k=1}^\infty \lambda_k(\bz)\phi_k(s)\phi_k(t)$ leads to a more satisfactory output than doing it with $\Gamma^*(s,t) = \Ex\{\Upsilon(s,t,\bz)\}$. To mimic real data, we borrowed the {\it phantom} function in MATLAB to design a 2D spatial structure. Specifically, three groups were considered and shown in Figure \ref{Sim_phantom}, and the data were generated from the fFPCA model,
$$
X(t,\bz) = \mu(t,\bz)+\sum_{k=1}^2 A_k(\bz) \psi_k(t,\bz),
$$
where $\mu(t,\bz)=0$, $\psi_1(t,\bz) = \sqrt{2}\sin(2\pi \|\bz\|t)/2$, $\psi_2(t,\bz) = \sqrt{2}\cos(2\pi \|\bz\|t)/2$, $A_k(\bz) \sim N(0,\lambda_k(\bz))$, $\lambda_k(\bz)$ was listed in Table \ref{lda_k}, $\bz\in[0,1]\times[0,1]$ and $t\in[0,1]$. The observations were made at $128\times 128$ spatial locations and each location contained 31 observations made at equi-spaced time points. The data were further contaminated with measurement errors generated from $N(0,.2^2)$. The experiment consists of 100 runs.
\begin{figure}
       \centering
       \includegraphics[width=3in]{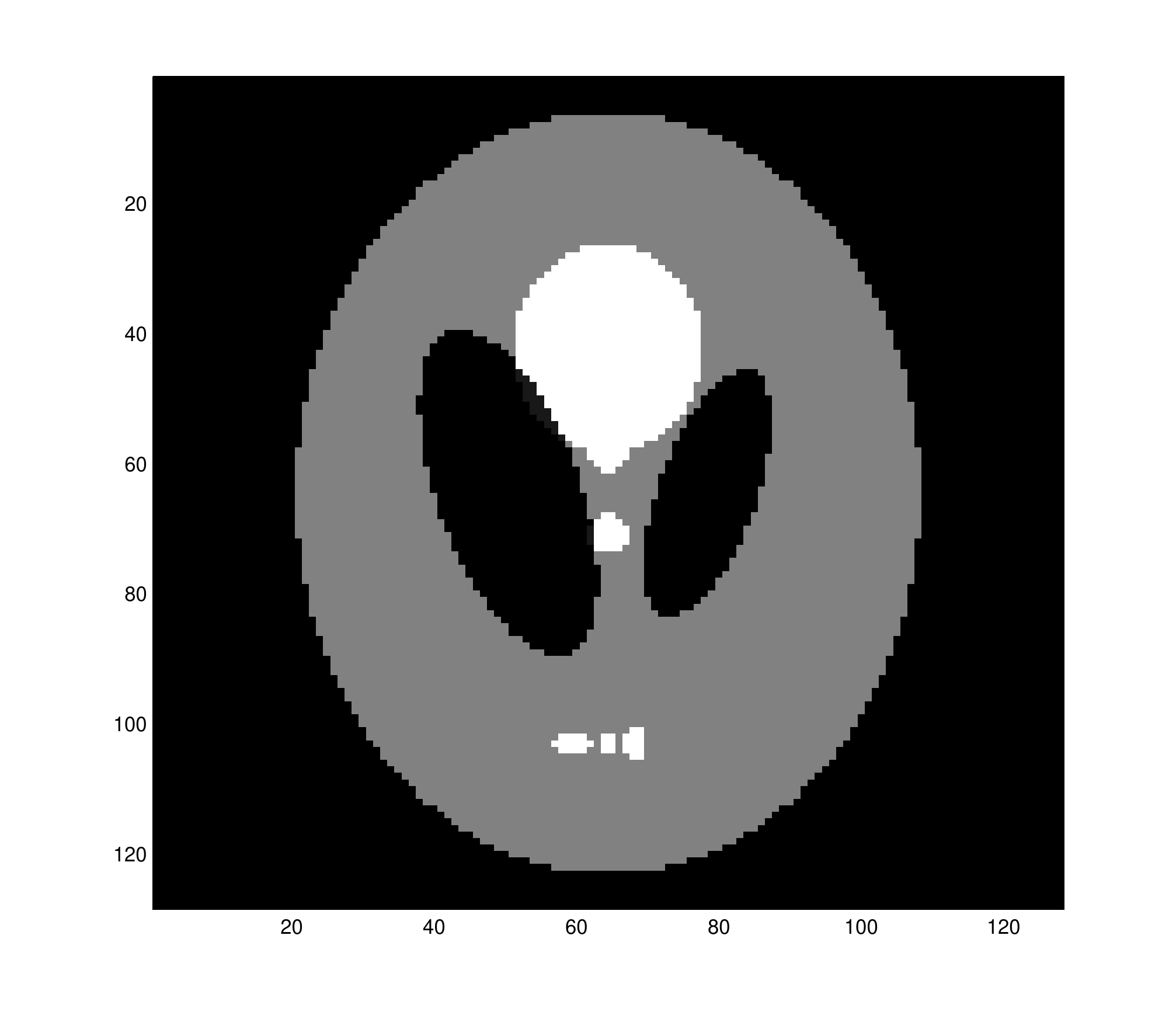}
       \caption{\label{Sim_phantom} Areas of three groups: the grey area corresponds to $\mathcal{S}_1$, the white area corresponds to $\mathcal{S}_2$, and the rest (black) area $\mathcal{S}_0$ is white noise.}
\end{figure}

\begin{table}
\caption{\label{lda_k} The $k$-th eigenvalue of group $i$, $\lambda_{i,k}(\bz)$. }
\centering
\begin{tabular}{cccc} \hline
cluster &  1  & 2 \\ \hline
 $\mathcal{S}_2 $  & $8+\cos(z^{(1)} 2 \pi)/2$ & $4+\sin((0.5+z^{(1)})2 \pi)/8$ \\
 $\mathcal{S}_1 $  & $3+ \cos(z^{(1)}\pi)\sin(0.5+z^{(2)})$ & $1.5+\cos(z^{(1)}\pi)\sin(0.5+z^{(2)})/2$ \\
 $\mathcal{S}_0 $  & 0 & 0  \\ \hline
\end{tabular}
\end{table}

To evaluate the performance of approximating $\Upsilon$ with $\Gamma$ and with $\Gamma^*$, we compared their ISEs, i.e., $\int\{\Upsilon(s,t,\bz)- \hat{\Gamma}(s,t,\bz)\}^2dsdt d\bz$ and $\int \{\Upsilon(s,t,\bz)- \hat{\Gamma}^*(s,t)\}^2dsdtd\bz$. The average ISEs (and the standard error) of $\hat{\Gamma}^*$ and $\hat{\Gamma}$ are $.3600$ $(.0002)$ and $.1369$ $(.0027)$, respectively. This shows that if the data are generated from a fFPCA model, approximating the covariance function with $\Gamma$ is overall a better choice given that it is computationally feasible and has a smaller ISE.

\subsection{Simulation 3}

To mimic real data, we employed the {\it phantom} structure again. We further assumed that the eigenvalues were group-dependent (listed in Table \ref{lda_k} and additional simulation). Two cases of the eigenfunctions were considered: in one, different groups shared a common set of eigenfunctions while in the other, the eigenfunctions were group-dependent. The goal was to demonstrate that clustering the WLS eigenvalue estimates outperforms clustering $\text{PC}^2$-based eigenvalue estimates, which is closely related to a mean-adjusted FPCA model, as it is not easy to verify this argument with real data in functional connectivity study in general. The $k$-means approach was employed for clustering and the performance was evaluated through the precision rates and the recall rates (defined in Table \ref{Sim3out}).

\subsubsection*{3A}
The data were generated from (\ref{eq:cfpca2}), where $\mu(\cdot,\cdot)=0$, $\lambda_{i,k}(\bz)$ were given in Table \ref{lda_k}, $\phi_k(t) = \sin(2\pi k t) + \cos(2\pi k t)$ for $k=1$ and $2$, and $\varepsilon_{ij}$ are \emph{i.i.d.} $N(0,0.4)$. 

\subsubsection*{3B}
The data generation setting was almost identical to that of 3A except that the eigenvalues were further smoothed across space. Specifically, we smoothed $\{(\lambda_k(\bz_i),\bz_i)|i=1,\ldots, 128^2\}$ in 3A via a Gaussian product kernel and the standard deviation for Gaussian kernel was $.03$. The purpose was to introduce smoothness across space. Due to the smoothing procedure, the correct clustering rate of a single location was modified accordingly. 

\subsubsection*{3C}
The procedure of generating data is similar to that of 3A with the difference that the eigenfunctions here are group-dependent. Specifically, the data were generated from (\ref{eq:cfpca2}), where $\mu(\cdot,\cdot)=0$, $\lambda_{i,k}(\bz)$ were given in Table \ref{lda_k}, $\varepsilon_{ij}$ were \emph{i.i.d.} from $N(0,0.4)$, and the group-dependent eigenfunctions were 
\begin{equation}\label{eq:sim1c_eig}
\phi_k(t,\bz) =  \left\{\begin{array}{lr}
        \phi_{1,k}(t) = \sin(2\pi kt) + \cos(2\pi kt), & \text{for } \bz\in \mathcal{S}_1,\\
        \phi_{2,k}(t) = \sin(2\pi kt) + \cos(4\pi kt), & \text{for } \bz\in \mathcal{S}_2,\\
        \phi_{3,k}(t) =  0, & \text{otherwise.}
        \end{array}\right.
\end{equation}


\subsubsection*{3D}
The procedure of data generation was almost identical to that of 3C except that the eigenvalues were further smoothed across space. Specifically, we smoothed $\{(\lambda_k(\bz_i),\bz_i)|i=1,\ldots, 128^2\}$ in 3C via a Gaussian product kernel and the standard deviation for Gaussian kernel was $.03$. The correct clustering rate of a single location was also modified accordingly. 

\begin{figure}
       \centering
       \includegraphics[width=6in]{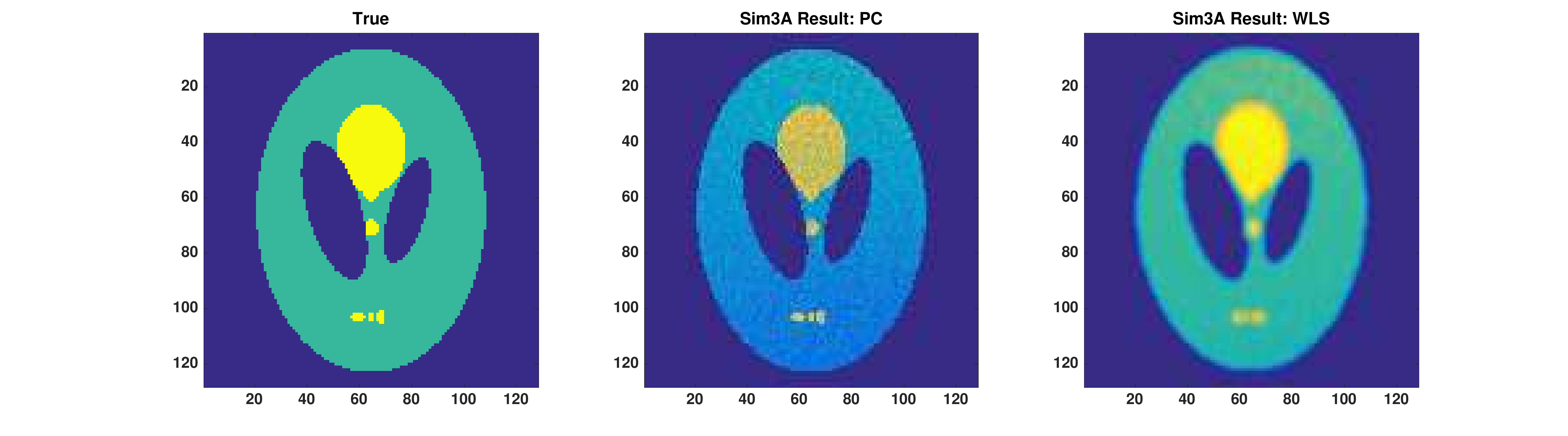}
       \includegraphics[width=6in]{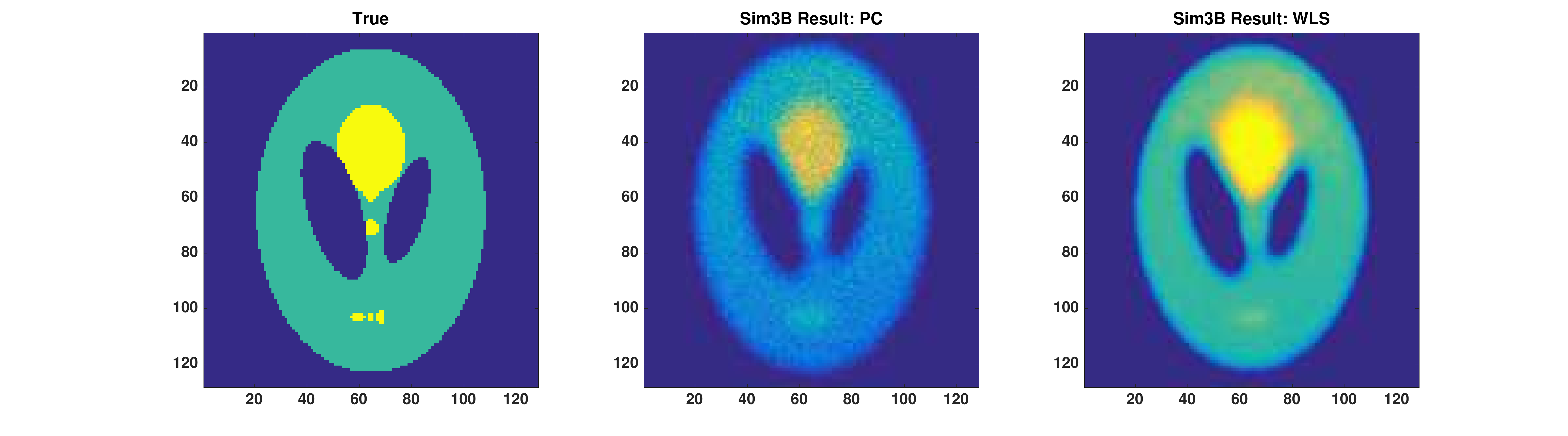}
       \includegraphics[width=6in]{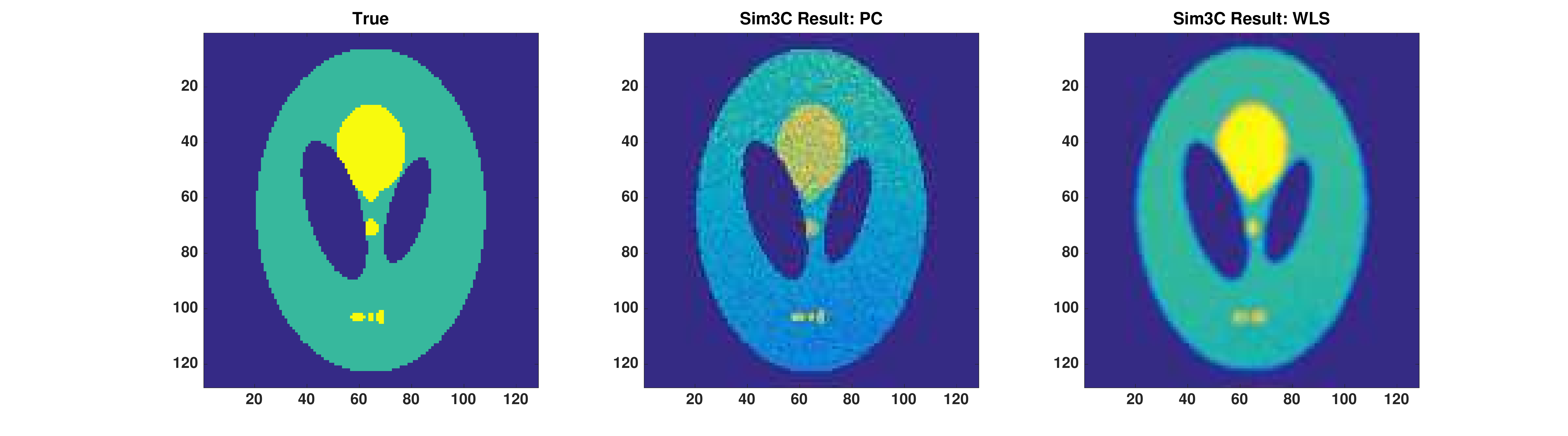}
       \includegraphics[width=6in]{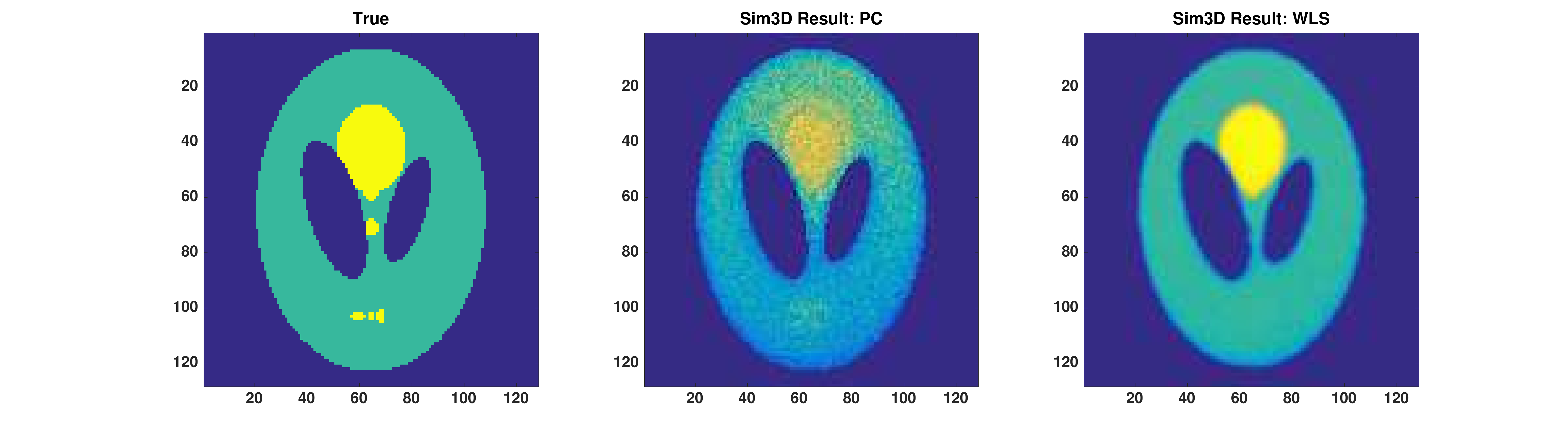}
       \caption{\label{Sim3} The true class labels and the averaged predicted labels in simulation 3. }
\end{figure}

Table \ref{Sim3out} indicates that clustering the WLS estimates performs significantly better in all three regions in terms of recall rates and precision rates under these four settings. Figure \ref{Sim3} shows the averaged predicted class labels of both strategies, and indicates that the results of clustering based on the WLS estimates are very consistent and with small variations. As the correct clustering rate of a single location has been modified for 3B and 3D, it might not be easy to compare the results between 3B and 3A, and those between 3D and 3C directly. The intuition, however, is that around the boundaries the signals become weaker after smoothing and it is easier to separate the pixels away from the boundaries to correct groups, and simultaneously around the boundaries between groups, the cost of mis-classification becomes lower and thus the precision rates are higher given the recall remains similar. The phenomenon is observed for both approaches.

\begin{table}
\caption{\label{Sim3out} The summarized statistics of Simulation 3, where Recall $= \frac{TP}{TP+FN}$, and  Precision $= \frac{TP}{TP+FP}$. TP, FP, FN stand for true positive, false positive and false negative, respectively.}
\centering\scriptsize
\begin{tabular}{cccccccccc} \hline
\multirow{2}{*}{Method} & \multirow{2}{*}{Cluster}  & \multicolumn{2}{c}{3A} & \multicolumn{2}{c}{3B} & \multicolumn{2}{c}{3C} &\multicolumn{2}{c}{3D} \\ \cline{3-10}
 &  &  Recall  & Precision &  Recall  & Precision & Recall  & Precision &  Recall  & Precision  \\ \hline
 & $\mathcal{S}_0 $  &  .983 (.002) & .970 (.005) & .973 (.003) & .987 (.002) &  .985 (.002) & .968 (.005) & .998 (.001) & .974 (.002) \\
WLS & $\mathcal{S}_1 $  &  .919 (.017) & .954 (.005) & .893 (.023) & .962 (.003) & .943 (.009) & .963 (.005) & .915 (.009) & .987 (.002) \\ 
 & $\mathcal{S}_2 $  &  .859 (.042) & .763 (.086) & .890 (.033) & .926 (.016) &  .884 (.027) & .939 (.015) & .889 (.026) & .987 (.007) \\ \hline
 & $\mathcal{S}_0 $  &  1.000 (.000) & .667 (.011) & .992 (.002) & .941 (.002) &  1.000 (.000) & .663 (.028) & 1.000 (.000) & .942 (.006)  \\
PC$^2$ & $\mathcal{S}_1 $  & .238 (.028) & .832 (.012) & .231 (.027) & .939 (.004) &  .232 (.062) & .881 (.018) & .247 (.063) & .977 (.005) \\ 
 & $\mathcal{S}_2 $  & .149 (.031) & .469 (.076) &  .150 (.032) & .914 (.020) &  .143 (.027) & .638 (.380) & .126 (.034) & .885 (.094) \\ \hline
\end{tabular}
\end{table}

\section{DATA ANALYSIS: HCP RESTING STATE fMRI DATA}\label{sec:application}

Functional Magnetic Resonance Imaging (fMRI) is one of the mainstays of brain research in that it allows \textit{in-vivo} detection of Blood Oxygenation Level Dependent (BOLD) signals \citep{Ogawa1990} describing cortical activation. The analysis of such signals has made it possible to determine which anatomical locations of the brain activate when a specific task is performed. More recently, it has become of central interest to use BOLD signals to determine how the different parts of the brain interact, i.e. infer the brain functional network organization. In fact, brain function is characterized by long-range interactions of distinct, local, and highly-specialized areas \citep{Tononi1994,Eickhoff2018}. Functional imaging data are routinely adopted to infer these interacting regions, which can help us understand brain organization and function by characterizing every brain location with its functional properties. Methodologically, it is essential that we take advantage of these functional properties to estimate long-range interactions while also accommodating the presence of localized units by incorporating spatial/anatomical information. Such studies are also referred to as functional connectivity studies.
In this section, we investigate functional connectivity by applying the eigen-adjusted FPCA framework introduced and we compare its results with a popular approach to connectivity analysis.

\begin{figure}[!htb]
\centering
\includegraphics[width=0.65\textwidth]{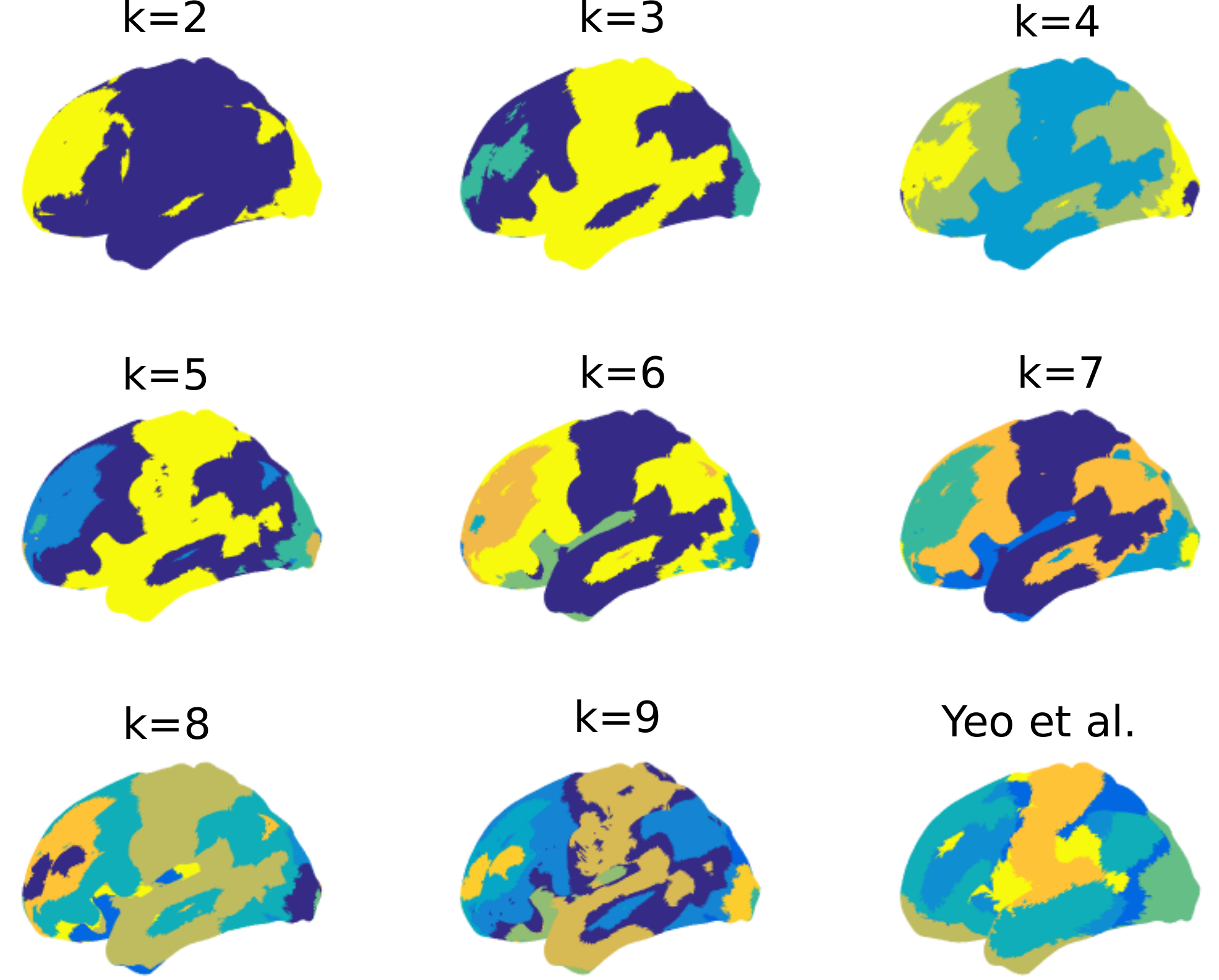}
\caption{Here we show the results of the $k$-means clustering applied to the WLS estimates of the eigenvalues, for different choices of $k$. On the bottom-right panel, the parcellation obtained in \cite{Yeo2011}. For instance, for $k=3, \ldots, 8$, we can see that the proposed model is able to separate the motor cortex (central part of the cerebral cortex) from the rest of the cerebral cortex. The visual cortex (bottom right) is also identified as a separate cluster. For higher numbers of clusters, e.g. $k=8$, a sub-cluster isolating the primary visual cortex is also identified.}
\label{fig:clustering_HCP}
\end{figure}

Data for this application consist of fMRI scans of 40 unrelated healthy subjects, collected within the Human Connectome Project \citep[HCP, ][]{Essen2012}. The minimal preprocessing pipeline has been applied to the dataset \citep{Glasser2013}. This includes artifact removal, motion correction, and registration to a standard space. The relevant fMRI signals arise from the cerebral cortex, which is the outermost layer of the brain. Here we adopt a distortion minimizing 2D planar parameterization of the cerebral cortex to apply the eigen-adjusted FPCA model, and visualize the results on the cortical surface. In our analysis, we consider the first 144 seconds of the fMRI signal detected under resting-state conditions, i.e., without requiring the subjects to perform any specific task. The time samples are acquired at regular intervals of 0.72 seconds. This results in a noisy signal $Y_s(t,\bz)$, for the $s$-th subject, observed at a regularly spaced grid of times $t_1, \ldots, t_N$, with $N = 200$, and locations $\bz_1, \ldots, \bz_{N_v} \subset \bbR^2$, with $N_v$ = 32,000.

\begin{figure}[!htb]
\centering
\includegraphics[width=0.6\textwidth,angle=270]{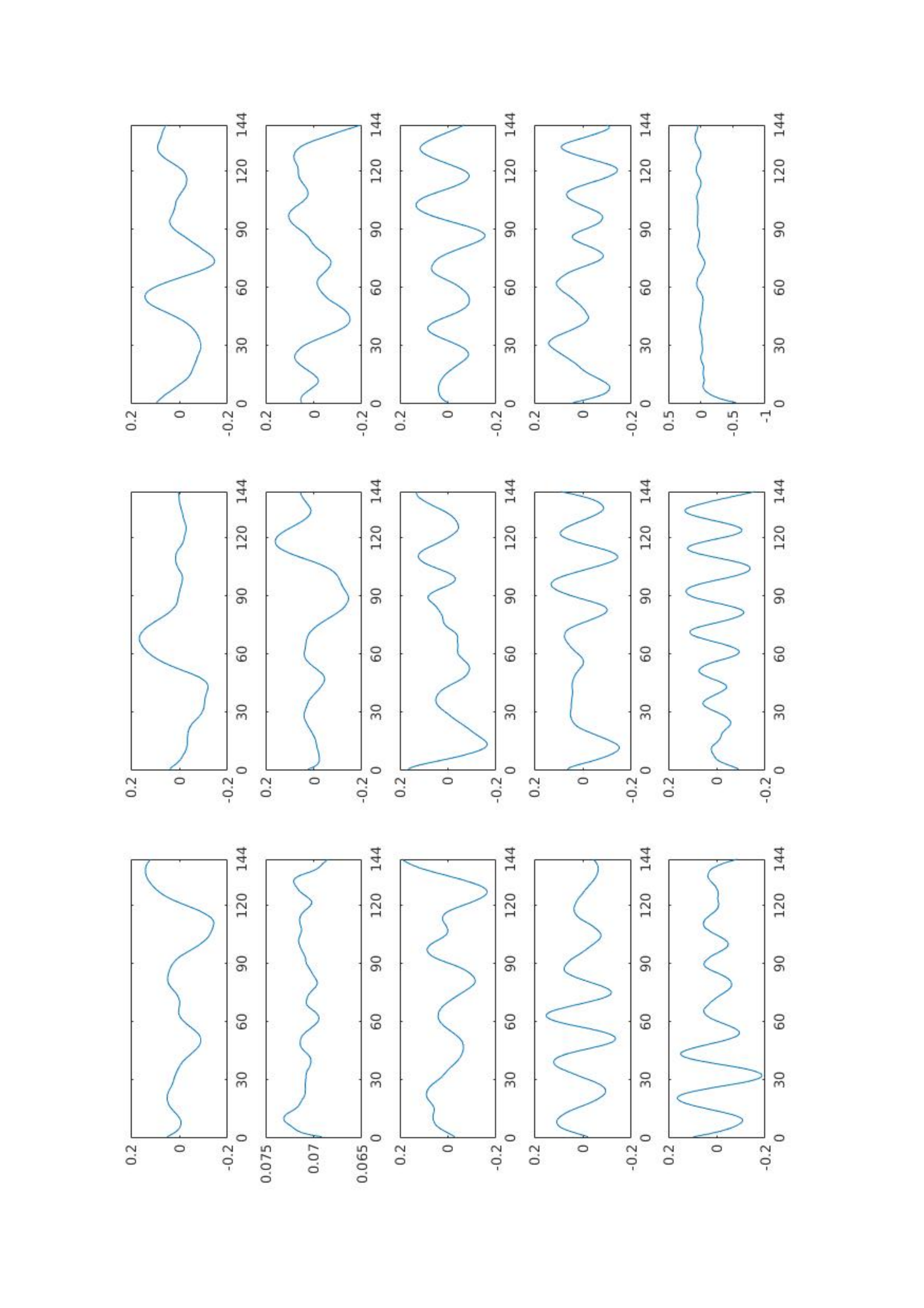}
\caption{From left-to-right then top-to-bottom, we show the resulting first $L = 15$ eigenfunctions $\hat\phi_k(t)$ estimated from the pooled empirical covariance function $\hat{\Gamma}^*(s,t)$.}
\label{fig:eigf_HCP}
\end{figure}

We perform eigen-adjusted FPCA on the resting-state fMRI data as follows. We apply a 3D (2D spatial $+$ 1D temporal) Nadaraya-Watson smoother, with product kernel, to estimate the mean function and use $5$-fold cross-validation to select their bandwidths. Cross-validation suggests very large bandwidths for the time smoothing, and specifically, a zero-mean function. This is not surprising as the location-specific mean, computed across time, has been removed in the preprocessing phase.

We then proceed with the estimation of the covariance functions $\Gamma(s, t, \bz)$, from the eigen-adjusted FPCA model in (\ref{eq:cov}), for each spatial location $\bz$. Specifically, the pooled covariance function $\Gamma^*(s,t)$ is estimated by applying a 2D local linear smoother to the associated empirical covariance. The bandwidth is chosen by $5$-fold cross-validation. We then compute the eigenfunctions $\hat\phi_k(t)$, shown in Figure~\ref{fig:eigf_HCP}, by applying eigen-analysis on a discretized version of the estimated covariance function $\hat{\Gamma}^*(s,t)$. We estimate the eigenvalue maps $\hat\lambda_j(\bz)$, $j=1,\ldots,L$ by means of WLS, as described in (\ref{eq:lda_wls}). We keep the first $L=15$ eigenfunctions, which explain more than $90\%$ of the total variance of the pooled covariance. We manually set $h_\lambda = 10$ to reflect the amount of smoothing imposed in the functional connectivity study in \cite{Yeo2011}, which we use to compare our results. For each location $\bz$, the resulting eigenvalues vector
\[
(\hat\lambda_1(\bz), \ldots, \hat\lambda_{L}(\bz))
\]
represents a multivariate summary of the temporal covariance structure $\Gamma(\cdot, \cdot, \bz)$. We finally identify co-activating brain regions by clustering the location-specific covariance functions through their associated eigenvalues estimates, and specifically, by applying k-means clustering on them, for different choices of $k$. The results are shown in Figure~\ref{fig:clustering_HCP}.

Resting-state time-series cannot be aligned to each other in any sensible way, thus eigenfunctions that describe localized (in time) modes of variation should instead be ascribed to experimental artifacts. The visualization of the estimated eigenfunctions offers a diagnostic tool to assess which are the modes of variation that for this reason should be removed from the subsequent analysis. From Figure~\ref{fig:eigf_HCP}, we can see that the $2$-nd and $15$-th eigenfunctions do indeed capture a localized mode of variation. However, removing them from the subsequent analysis did not substantially change the resulting parcellations.

Moreover, note that in this application, the value at each spatial location in $\bz_1, \ldots, \bz_{N_v}$ can be observed for every subject. Therefore, the spatial location $\bm{\bz}$ is not random. Specifically, the pooled covariance function becomes $\Gamma^*(s,t) = \sum_k \bar{\lambda}_k \phi_k(s)\phi_k(t)$, where $\bar{\lambda}_k = \sum_{v=1}^{N_v} \lambda_k(\bm{\bz}_v)/N_v.$
Therefore, if none of $\bar{\lambda}_k$'s are tied, the eigenfunctions can be consistently estimated.

In Figure~\ref{fig:clustering_HCP}, we also show the functional parcellation obtained in \cite{Yeo2011}, where instead the following approach is adopted. A set of $1,175$ uniformly sampled Regions of Interest (ROI) vertices on the cerebral cortex is identified. For each subject, Pearson correlation between the fMRI time series at each spatial location $\bz$ and that on the ROI vertices are computed, so that each spatial location is characterized by its first-order dependency to the ROIs. Only the top 10\% correlation values for each subject are kept and these are binarized and averaged across subjects. For each vertex $\bz$, this results in a multivariate descriptor
\[
(\rho(\bz,\text{ROI}_1), \ldots, \rho(\bz,\text{ROI}_{\text{1,175}})),
\]
where $\rho(\bz_1, \bz_2)$ is the correlation between the time-series at the two vertices $\bz_1, \bz_2$. Finally, spatial clustering is applied on the multivariate descriptors obtained by averaging those across subjects. In Figure~\ref{fig:clustering_HCP}, bottom-right panel, we show the resulting 7-networks parcellation.

Both the presented approaches aim at constructing multivariate summaries of the connectivity at each vertex, although in different fashions. To this purpose, the eigen-adjusted FPCA approach constructs a descriptor based on the temporal covariance structure at each location. Instead, the approach in \cite{Yeo2011}, constructs a correlation descriptor based both on time and locations. It could be argued that in the latter approach, given that spatial information is used to construct both the multivariate descriptors and to perform spatial clustering, you would expect the clustering to be influenced by the choice of the ROIs. This cannot happen with the eigen-adjusted FPCA approach, where the multivariate descriptor exploits exclusively the temporal component, with the exception of the spatial smoothing effect introduced to contrast the low signal-to-noise ratio.

Despite the two approaches being different, and thus not immediately comparable, their results are compatible. Consider, for instance, the eigen-adjusted FPCA results in Figure~\ref{fig:clustering_HCP}, for $k=7$. The proposed model is able to separate the motor cortex (central part of the cerebral cortex) from the prefrontal cortex (left part of the cerebral cortex). The visual cortex (bottom right) is also identified as a separate cluster, with a sub-cluster that seems to isolate the primary visual cortex. This separation of the brain into distinct clusters determines brain networks in a completely data-driven way, without the need for a-priori explicit or implicit spatial assumptions.

\section{Discussions}\label{sec:discussions}

We have demonstrated that in a number of situations the eigen-adjusted FPCA approach is able to provide a computationally efficient yet flexible alternative to either a simple mean-adjusted model or a fully-adjusted FPCA approach in the presence of covariates. Consistent estimates can be obtained using a WLS approach, without the need to estimate the full spatially varying covariance function but rather by a pooled version. It has been shown in simulations that the approach is effective in finite samples, particularly in relation to the kinds of data available in brain imaging applications. The application to functional connectivity shows that a comparison can be made based on the full covariate information across space, without the need to either reduce the dimension through an a-priori spatial downsampling (such as definitions of ROIs) or via a seed-based approach of choosing one or two locations to compare the full data set against.

There are a number of limitations to the approach. Firstly, should the eigenfunctions themselves directly vary with the covariates, then the model will only ever provide an approximation. However, as we have seen in the simulations, the eigen-adjusted covariance approximation is considerably better than a more simplistic approach of using a pooled covariance function in this case. In addition, it is possible that the WLS estimator could produce negative estimates, in finite samples, for positive eigenvalues. Should this be an issue, a number of possible recourses exist, including truncation or non-negative least-squares approaches. However, for positive eigenvalues, the asymptotic properties show that this will only ever be a finite sample problem.

Overall, the eigen-adjusted FPCA approach provides a set of tools to investigate functional covariance structures that include covariate information. It is likely that as more applied questions become framed in terms of second-order structure, as seen in functional connectivity, techniques such as these will only be further needed and utilized.

\section*{Acknowledgements}
The authors would like to express gratitude for the valuable comments of an anonymous referee, the Associate Editor and the Editor.

\appendix

\section{Assumptions}\label{app:assum}


\bigskip
The estimators $\hat\mu$ and $\hat\Gamma^*$ have been constructed by the local linear smoothing method. Therefore, it is natural to make the standard smoothness assumptions on the second derivatives of $\mu$ and $\Gamma^*$. Assume that the data
$(\mathbf{T}_i,\bZ_i,\mathbf{Y}_i)$, $i=1,\ldots,n$, have the same distribution, where $\mathbf{T}_i=(T_{i1},\ldots,T_{iN_i})$ and
$\mathbf{Y}_i=(Y_{i1},\ldots,Y_{iN_i})$.  Notice that we assume $(T_{ij},\bZ_{i})$ has the marginal density $g(t,\bz)$. Additional assumptions and conditions are listed below.

\begin{itemize}
\item[A.1] For some constants $m_T>0$ and $M_T<\infty$, $m_T\leq g(t,\bz) \leq M_T$ for all $t\in\bbT$ and $\bz\in\bbZ$. Further, $g(\cdot,\cdot)$ is differentiable with a bounded derivative.

\item[A.2] The kernel function $K(\cdot)$ is a symmetric probability density function on $[-1,1]$ and is of bounded variation on $[-1,1]$. Further, we denote $\nu_2 = \int_{-1}^1 u^2K(u)du$.



\end{itemize}

The following assumptions are about $Y(t)$ and were also made in \cite{LiH:10:1}.  Suppose the observation of the $i$th subject at time $T_{ij}$ is $Y_{ij} = \mu(T_{ij},\bZ_i)+U_{ij}$, where $\text{cov}(U_i(s),U_i(t))=\Gamma(s,t,\bz_i)+\sigma^2I(s=t)$ and $\Gamma(s,t,\bz_i)=\sum_{\ell} \lambda_\ell(\bz_i)\phi_\ell(s)\phi_\ell(t)$. Let $\bh_z = (h_z^{(1)},\ldots,h_z^{(p)})^T$ and denote $|\bh_z| = \prod_{i=1}^p h_z^{(i)}$. Also, $\gamma_{nk} = \left( n^{-1}\sum_{i=1}^n N_i^{-k} \right)^{-1}$ for $k=1$, and $2$. 

\begin{itemize}

\item[B.1] $\mu$ is twice differentiable and the second derivative is bounded on $\bbT\times \mathbb{\bZ}$.

\item[B.2] $\Ex(|U_{ij}|^{\lambda_\mu}) < \infty$ and $\Ex(\sup_{t\in\bbT} |X(t)|^{\lambda_\mu}) < \infty$ for some $\lambda_\mu \in (2,\infty)$; $h_\mu \rightarrow 0$ and $(h^2_\mu|\bh_z|+h_\mu|\bh_z|/\gamma_{n1})^{-1} (\log n/n)^{1-2/\lambda_\mu} \rightarrow 0$ as $n \rightarrow \infty$.

\item[B.3] All second-order partial derivatives of $\Gamma^*$ exist and are bounded on
$\bbT\times\bbT$.

\item[B.4] $\Ex(|U_{ij}|^{2\lambda_\Gamma}) < \infty$ and $\Ex(\sup_{t\in\bbT} |X(t)|^{2\lambda_\Gamma}) < \infty$ for some $\lambda_\Gamma \in (2,\infty)$; $h_\Gamma \rightarrow 0$ and $(h^4_\Gamma+h^3_\Gamma/\gamma_{n1} + h^2_\Gamma/\gamma_{n2})^{-1}(\log n/n)^{1-2/\lambda_\Gamma} \rightarrow 0$ as $n \rightarrow \infty$

\item[B.5] All second-order partial derivatives of $\lambda_k(\bz)$ exist and are bounded on
$\bbZ$ for $1\leq k \leq L$.

\item[B.6] $\bh_\lambda \rightarrow 0$ and $(|\bh_\lambda|)^{-1}(\log n/n)^{1-2/\eta} \rightarrow 0$ as $n \rightarrow \infty$ for some $\eta \in (2,\infty)$.

\item[B.7] For $\bz\in\bbZ$ and all $j$, $0<\lambda_j(\bz)/\lambda_j^*<\infty$.

\end{itemize}

The following assumptions are for the lemmas. Let $\bd_n = (d_n^{(1)},\ldots,d_n^{(p)})^T$.

\begin{itemize}
\item[C.1] $\Ex(|U|^\lambda)<\infty$ and $\Ex(\sup_{t\in\bbT,\bz\in\mathbb{\bZ}}|X(t,\bz)|^\lambda)<\infty \text{ for some } \lambda\in(2,\infty).$

\item[C.2] Let $c_n$ and $d_n^{(i)}$ for $i=1,\ldots,p$ be positive sequences tending to 0, $\beta_n = c_n^2|\bd_n| + c_n|\bd_n|/\gamma_{n1}$ and $\beta_n^{-1}(\log n/n)^{1-2/\lambda}=o(1).$

\item[C.2'] Let $c_n$ be a positive sequence tending to 0, $\beta_n = c_n^2 + c_n/\gamma_{n1}$ and $\beta_n^{-1}(\log n/n)^{1-2/\lambda}=o(1).$

\item[C.3] $\Ex(|U|^{2\lambda})<\infty$ and $\Ex(\sup_{t\in\bbT,\bz\in\mathbb{\bZ}}|Y(t,\bz)|^{2\lambda})<\infty \text{ for some } \lambda\in(2,\infty).$

\item[C.4] Let $c_n$ be a positive sequence tending to 0, $\beta_n = c_n^4 + c_n^3/\gamma_{n1}+ c_n^2/\gamma_{n2}$ and $\beta_n^{-1}(\log n/n)^{1-2/\lambda}=o(1).$

\end{itemize}

\section{A PC-based Approach for $\lambda_k(\bz)$ Estimation}\label{app:PC}

Besides the WLS approach, one intuition is to first predict the PC scores and apply a $p$-dimensional smoother to the squared PC scores given that $\Ex\{A_k^2(\bz)\}=\lambda_k(\bz)$. When the data are dense, the PC scores can be predicted well via a numerical approach for integration.
Here, we employ the trapezoidal rule for integration. Specifically,
\begin{equation}\label{Eq:PCE}
\hat A_{ik} = \sum_{j=1}^{N_i-1} \left[\hat U_{ij}\hat\phi_k(t_{i,j}) + \hat U_{i,j+1}\hat\phi_k(t_{i,j+1})\right] \frac{(t_{i,j+1}-t_{ij})}{2}.
\end{equation}
Then, a $p$-dimensional smoother can be applied to $\{(\hat A_{ik}^2,\bz_i)|i=1,\ldots,n\}$ to consistently estimate $\lambda_k(\bz)$ given that $\hat A_{ik}^2$ is a consistent estimator of $\lambda_k(\bz_i)$. 
Let
\begin{equation*}
\tilde{A}_{ik}  = \int_\mathcal{T} \{X_i(t)-\mu(t,\bz_i)\}\hat\phi_k(t)dt \text{, and }
\hat{A}^*_{ik}  = \int_\mathcal{T} \{Y_i(t)-\hat\mu(t,\bz_i)\} \hat\phi_k(t)dt.
\end{equation*}
The asymptotics of $\hat{A}_{ik}$ can be obtained by employing the inequality, $$|\hat{A}_{ik}-A_{ik}| \leq |\hat{A}_{ik}-\hat{A}^*_{ik}| + |\hat{A}^*_{ik}-\tilde{A}_{ik}|+|\tilde{A}_{ik}-A_{ik}|.$$ First, $|\tilde{A}_{ik} - A_{ik}|=O(h_\Gamma^2+\delta_{n1}+h_t^4+h_z^4+\delta_n^2) \ a.s.$ by applying Theorem \ref{A:thm3}. Second, $|\hat{A}_{ik} - \hat{A}^*_{ik} | = O(1/N_i^4) \ a.s.$ since $\hat{A}^*_{ik}$ is an approximation by the trapezoidal rule. Last, $|\hat{A}^*_{ik}-\tilde{A}_{ik}|=O((\log N_i/N_i)^{1/2} + \delta_n+h_t^2+h_z^2) \ a.s.$ by Lemma 5 in \cite{LiH:10:1} and Theorem \ref{A:thm1}.
Therefore, $|\hat{A}_{ik} - A_{ik}|=O((\log N_i/N_i)^{1/2} + h_\Gamma^2+\delta_{n1}+h_t^2+h_z^2+\delta_n)$ $a.s.$. Since $\hat{A}_{ik}$ is a consistent estimator, $\lambda_k(\bz)$ can be estimated by applying a $p$-dimensional smoother to $\{(\hat A_{ik}^2,\bz_i)|i=1,\ldots,n\}$. In the numerical studies, a local linear smoother is used. Specifically,
\begin{align}
 \hat\lambda_k(\bz) & = \hat b_0,  \text{ where } \label{eq:lda_in} \\ \nonumber & (\hat b_0,\hat{\bm{b}}_1^T)^T  = \arg\min_{\bm{b}} \frac{1}{n} \sum_{i=1}^{n} \left\{ \hat{A}^2_{ik}-b_0-\bm{b}_1^T(\bz_i-\bz) \right\}^2 \Big( \prod_{k=1}^p K_{h_\lambda^{(k)}}(z_i^{(k)}-z^{(k)})\Big).
\end{align}

\section*{SUPPLEMENTARY MATERIALS}
\begin{description}

\item[Supplement] Assumptions, Lemmas and Proofs and additional simulation (*.pdf)

\end{description}

\bibliography{fda}
\bibliographystyle{chicago}

\end{document}